\documentclass{article}

\usepackage[preprint]{neurips_2026}   

\usepackage[T1]{fontenc}
\usepackage[utf8]{inputenc}
\usepackage{microtype}
\usepackage{times}

\usepackage{amsmath,amsfonts,amssymb,amsthm,mathtools}
\usepackage{dsfont}

\usepackage{graphicx}
\usepackage{booktabs}
\usepackage{multirow}
\usepackage{multicol}
\usepackage{subcaption}
\usepackage[table]{xcolor}

\usepackage[linesnumbered,ruled,vlined]{algorithm2e}

\usepackage{enumitem}
\usepackage{float}

\usepackage{hyperref}
\hypersetup{colorlinks=true, linkcolor=blue, citecolor=blue, urlcolor=blue}
\usepackage{url}
\usepackage{xurl}

\newtheorem{theorem}{Theorem}[section]
\newtheorem{proposition}[theorem]{Proposition}
\newtheorem{corollary}[theorem]{Corollary}

\theoremstyle{definition}

\newtheorem{assumption}[theorem]{Assumption}
\theoremstyle{remark}

\numberwithin{equation}{section}

\title{%
  When AI Reviews Its Own Code:\\
  Recursive Self-Training Collapse in Code LLMs
}

\author{\textbf{Xinyuan Song}$^{1}$ \quad
    \textbf{Zekun Cai}$^{2,3}$ \quad
    \textbf{Liang Zhao}$^{1}$ \\
    $^{1}$Emory University, Atlanta, GA, USA \quad
    $^{2}$The University of Tokyo, Tokyo, Japan \\
    $^{3}$LocationMind, Tokyo, Japan \\
    \texttt{\{xinyuan.song,liang.zhao\}@emory.edu, caizekun@csis.u-tokyo.ac.jp} \\
}

\begin{document}

\maketitle

\begin{abstract}
Recursive self-training can degrade neural generative models when generated data is reused without fresh human data or external quality control. We study this risk in \textbf{code LLMs}, where AI-generated code can enter real repositories, later become training data, and create a repository-scale self-training loop. While software development traditionally interrupts this loop through \emph{pull-request review}, tests, compilation, and human approval, AI coding tools now produce code faster than humans can review it, and code review itself is increasingly automated by AI systems. We therefore compare three recursive fine-tuning regimes: \textbf{no review}, \textbf{Human-gate review} using model-independent filters such as compilation and static quality checks, and \textbf{AI-self-gate review} using the code LLM's own signals such as perplexity and binary self-scoring. Across multiple code LLMs and benchmarks, \emph{no review} collapses fastest, \emph{Human-gate} filters slow but do not stop collapse, and \emph{AI-self-gate} filters can look strong early but later lose their filtering effect. In the clearest case, the binary self-gate enters a rubber-stamp regime where acceptance scores rise while benchmark correctness falls. We explain this behavior by formulating review as gated distributional reweighting, proving that AI self-gating degenerates to ungated self-training under a self-confirming acceptance condition, and giving a spectral analysis of representation-level covariance concentration under recursive retraining. These results suggest that stable recursive code LLM training requires exogenous verification rather than model-coupled self-review. Codes are available at \url{https://github.com/Hik289/code-retraining.git}.
\end{abstract}

 
\section{Introduction}
\label{sec:intro}
 
Recursive self-training has been repeatedly shown to be harmful for neural generative models when the loop is not anchored by fresh human data or an external quality signal. In image generation, self-consuming training can reduce distributional variance and drive models toward low-diversity outputs, a failure mode described as model amplification disorder~\cite{alemohammad2023selfconsuminggenerativemodelsmad}. In general generative modeling, training on recursively generated data can cause model collapse: distributional errors accumulate, low-probability modes are lost, and the learned distribution drifts away from the original data distribution~\cite{shumailov2024curserecursiontraininggenerated,dohmatob2024taleoftails}. Similar effects have been reported for language models, where synthetic text can reduce diversity, amplify earlier mistakes, and degrade long-horizon quality unless real data or strong external filtering is preserved in the training loop~\cite{briesch2023llmselfsuffering,seddik2024howbadsynthetic,suresh2024rateofcollapse}. These results give a general warning: self-training is not automatically self-improvement. When a model is trained on data produced by earlier versions of the same kind of model, the loop can become self-reinforcing rather than corrective.

Code large language models (LLMs)~\cite{benallal2023santacoder,li2023starcoder,roziere2023codellama,hui2024qwen25coder} are likely to face the same risk in a concrete and testable form. A code model can generate programs, add them back into the training corpus, and fine-tune the next model on this synthetic code. If these programs were always correct, diverse, and well reviewed, the loop could cheaply scale code data. In practice, generated code often contains bugs, incomplete logic, repeated templates, missing edge cases, and narrow stylistic patterns. Recursive training can therefore copy these errors into later models, making them better at imitating their own code style while worse at solving programming tasks. This is the recursive self-training trap for code: more data is created, but it can carry the model's own functional errors into future training.

This issue is no longer only an offline training concern. AI-generated code is rapidly entering real repositories through tools such as GitHub Copilot~\cite{github2026copilot}, Cursor~\cite{cursor2026}, Amazon CodeWhisperer~\cite{aws2026codewhisperer}, and Devin~\cite{cognition2026devin}. Studies report large productivity gains from Copilot~\cite{peng2023impactcopilot}, and enterprise reports show that Copilot-suggested code is frequently accepted, committed, and merged into pull requests~\cite{github2024copilotaccenture}. Future code LLMs may therefore be trained on repositories already containing AI-generated or AI-assisted code. If such code is collected without review, the loop becomes recursive self-training at repository scale: AI writes code, the code enters GitHub, GitHub enters the next training corpus, and the next model learns from earlier model outputs.

In real software development, code is usually filtered before entering a repository. A developer opens a pull request (PR), and the change may pass through compilation, tests, static analysis, and human review before merging. Modern code review is a lightweight but consequential quality-control process: it catches defects, transfers project knowledge, and affects downstream software quality~\cite{bacchelli2013expectations,bosu2015characteristics,mcintosh2016impact,sadowski2018modern}. This PR process acts as an external quality gate: the author does not define the acceptance rule, and the rule does not become weaker when the author writes worse code. Failed builds, failing tests, or poor design can still block the PR. We call this model-independent acceptance mechanism a \emph{Human gate}. The term does not require every decision to be manual; it means that the verifier is exogenous to the generator and preserves a fixed quality standard.

The AI coding era weakens this assumption. AI tools can generate PRs faster than humans can review them, while AI code review is becoming a practical engineering workflow. Recent studies examine LLM-assisted review with AI co-reviewers and interactive review assistants~\cite{mohajer2025rethinkingcodereview}; large-scale GitHub Actions analyses show AI review tools automatically commenting on real PRs~\cite{guo2025aicodereviewchanges}; and GitHub Copilot code review is now integrated into the PR workflow~\cite{github2026copilotcodereview}. At the same time, LLM-as-judge systems are known to be sensitive to evaluator design and can favor model-like outputs~\cite{zheng2023judging,panickssery2024selfpreference}. Future repositories may therefore contain both AI-generated and AI-reviewed code. In this setting, the verifier may no longer be a human-quality external process, but another AI model, possibly close to the generator. If AI writes the code and AI reviews the code, the acceptance signal can become coupled to the same distribution being retrained. The review process may look like filtering, but a biased AI reviewer can gradually accept the generator's degraded outputs as normal.

This observation leads to the central question of this paper: \textbf{In recursive code LLM training, how do no review, human-style review, and AI-based review differ?} Review changes the training distribution: without review, the model trains on all generated outputs; with review, training is reweighted toward samples that satisfy an acceptance rule. The key question is who defines this rule. A \emph{Human gate} is fixed and model-independent, so it can reject degraded outputs as the generator worsens. An \emph{AI self-gate} is coupled to the generator, so it can drift with the model and accept degraded model style as high quality. This concern is consistent with evidence that LLM reviewers still miss many human-flagged PR issues, even when they are strong code generators~\cite{kumar2026sweprbench,tong2024codejudge,zhao2025codejudgeeval}.

We study this question through recursive fine-tuning experiments that directly compare no review, Human-gate review, and AI-self-gate review. The Human-gate setting uses external, model-independent checks to decide which generated code enters future training. The AI-self-gate setting uses the code LLM itself to evaluate its own generated code, so the same model both produces and filters the training data.

We validate this comparison across multiple code LLM families and benchmarks. Our primary experiments use SantaCoder~\cite{benallal2023santacoder}, a 1.1B-parameter code model trained on permissively licensed code from The Stack~\cite{kocetkov2022stack}, and evaluate recursive fine-tuning on HumanEval~\cite{human_eval,openai_humaneval}, MBPP~\cite{mbpp}, and LiveCodeBench~\cite{livecodebench}, with additional results on HumanEval+ and MBPP+~\cite{liu2024humaneval}. We further test StarCoder2~\cite{lozhkov2024starcoder2}, Qwen2.5-Coder~\cite{hui2024qwen25coder}, and Code Llama~\cite{roziere2023codellama}, with extended vanilla trajectories for StarCoder~\cite{li2023starcoder} in the appendix, to examine whether the same pattern holds across model families.

The results show a clear ordering across review regimes: \textbf{no review} collapses fastest, \textbf{Human-gate filters} slow but do not stop collapse, and \textbf{AI-self-gate filters} can look strong early but later lose their filtering effect. In particular, unfiltered recursive fine-tuning degrades all tested model families; Human-gate filters such as compile checks and quality rules reduce the collapse rate but cannot prevent long-horizon semantic drift; and AI-self-gate filters such as perplexity and binary self-scoring become less discriminative as the generator degrades. The binary classifier is the clearest AI-self-gate failure case: acceptance scores rise while benchmark correctness falls, indicating a self-confirming rubber-stamp regime.

Our contributions are summarized as follows:
\begin{enumerate}[leftmargin=1.5em,itemsep=1pt]
  \item \textbf{Code LLM self-training collapse:}
    We provide a systematic empirical study of recursive self-training collapse in code LLMs, covering five primary review regimes on SantaCoder, extended appendix trajectories for additional gate variants, and cross-model validation on StarCoder2, Qwen2.5-Coder, and Code Llama.

  \item \textbf{Review as gated reweighting:}
    We formulate review-based recursive training as gated sampling, where generated samples are reweighted by an acceptance function before entering the next round.
    This explains why useful gates can slow collapse relative to unfiltered self-training.

  \item \textbf{Human gate vs.\ AI self-gate:}
    We identify exogeneity as the key difference between review regimes: Human gates use fixed, model-independent criteria, while AI self-gates use scores coupled to the generator and can become self-confirming.

  \item \textbf{Theory of collapse and gate failure:}
    We prove that AI self-gate training degenerates to ungated self-training under a self acceptance condition (Theorem~\ref{thm:gating_degenerate}).
    We also provide a spectral analysis showing representation covariance concentration under recursive retraining (Proposition~\ref{prop:spectral}).
\end{enumerate}


\section{Problem Formulation and Theoretical Analysis}
\label{sec:framework}

\subsection{Iterative Self-Training Without vs.\ With Gating}
\label{subsec:ungated_gated}

Figure~\ref{fig:training_loops} illustrates the two recursive training pipelines considered in this paper: ungated self-training, which reuses all generated code, and gated self-training, which filters generated code before retraining. Let $x\in\mathcal X$ denote a prompt or programming task, and let $c\in\mathcal C$ denote a candidate code solution.

\textbf{No verification (ungated recursion).}
In ungated recursive self-training, every generated sample is directly reused as training data. The synthetic data distribution at iteration $t$ is
\begin{equation}
m_t^{\mathrm{ungated}}(x,c):=p_X(x)p_{\theta_t}(c\mid x).
\end{equation}
The next model is obtained by maximum-likelihood estimation (\textbf{MLE}) training on this distribution:
\begin{equation}
\theta_{t+1}^{\mathrm{ungated}}\in \arg\max_{\theta}\mathbb{E}_{(x,c)\sim m_t^{\mathrm{ungated}}}\big[\log p_{\theta}(c\mid x)\big].
\end{equation}
Equivalently, this minimizes $\mathrm{KL}(m_t^{\mathrm{ungated}}\|p_Xp_\theta)$. Thus, ungated recursion only enforces self-consistency: the next model is trained to imitate the current model's samples. It does not impose any external constraint on correctness, compilability, or execution success.

\textbf{Verification (gated recursion).}
In gated recursive self-training, generated code is accepted only after passing a review or verification signal. Let $\mathcal{E}(x,c)\in\{0,1\}$ denote the acceptance event, and define
\begin{equation}\label{eq:acceptance_score}
r(x,c):=\mathbb{P}(\mathcal{E}(x,c)=1\mid x,c)\in[0,1].
\end{equation}
Here, $r(x,c)$ can represent a compile check, execution test, static quality filter, human review score, or AI-based review score. Conditioning the generator on acceptance gives the accepted distribution
\begin{equation}
q_{\theta_t}(c\mid x):=\mathbb{P}(c\mid x,\mathcal{E}=1)=\frac{p_{\theta_t}(c\mid x)r(x,c)}{Z_{\theta_t}(x)},\quad Z_{\theta_t}(x):=\sum_{c'\in\mathcal C}p_{\theta_t}(c'\mid x)r(x,c').
\end{equation}
The gated synthetic data distribution is therefore
\begin{equation}
m_t^{\mathrm{gated}}(x,c):=p_X(x)q_{\theta_t}(c\mid x).
\end{equation}
Training on accepted samples yields
\begin{equation}
\theta_{t+1}^{\mathrm{gated}}\in \arg\max_{\theta}\mathbb{E}_{(x,c)\sim m_t^{\mathrm{gated}}}\big[\log p_{\theta}(c\mid x)\big].
\end{equation}
Thus, gated recursion differs from ungated recursion only through the sampling measure: before training the next model, generated samples are reweighted by the acceptance signal $r(x,c)$.

\textbf{Difference} (see Figures~\ref{fig:training_loops}).
The two recursions differ only in the sampling measure:
\begin{equation}
m_t^{\mathrm{gated}}(x,c) \propto m_t^{\mathrm{ungated}}(x,c)\cdot r(x,c).
\end{equation}
Thus, gated recursion trains on the same generated distribution as ungated recursion, but reweights samples by acceptance: high-$r(x,c)$ code receives more training mass, while low-$r(x,c)$ code is suppressed.
At a fixed point of the gated recursion (pure self-training), one must satisfy
\begin{equation}
p_{\theta^*}(c\mid x)=q_{\theta^*}(c\mid x) \Longleftrightarrow p_{\theta^*}(c\mid x)\propto p_{\theta^*}(c\mid x)r(x,c),
\end{equation}
which implies that (for each $x$) probability mass concentrates on the support where $r(x,c)$ is maximal (e.g., $r(x,c)=1$ if such code exists). In contrast, the ungated recursion enforces only self-reproduction of its own samples and does not bias toward executability.
This is why a useful exogenous gate can slow collapse, while a gate that becomes constant on the generator's support loses its filtering effect and approaches ungated self-training.

\begin{figure}[t]
  \centering
  \setlength{\abovecaptionskip}{2pt}
  \setlength{\belowcaptionskip}{-6pt}
  \begin{subfigure}[t]{0.51\linewidth}
    \centering
    \includegraphics[width=\linewidth]{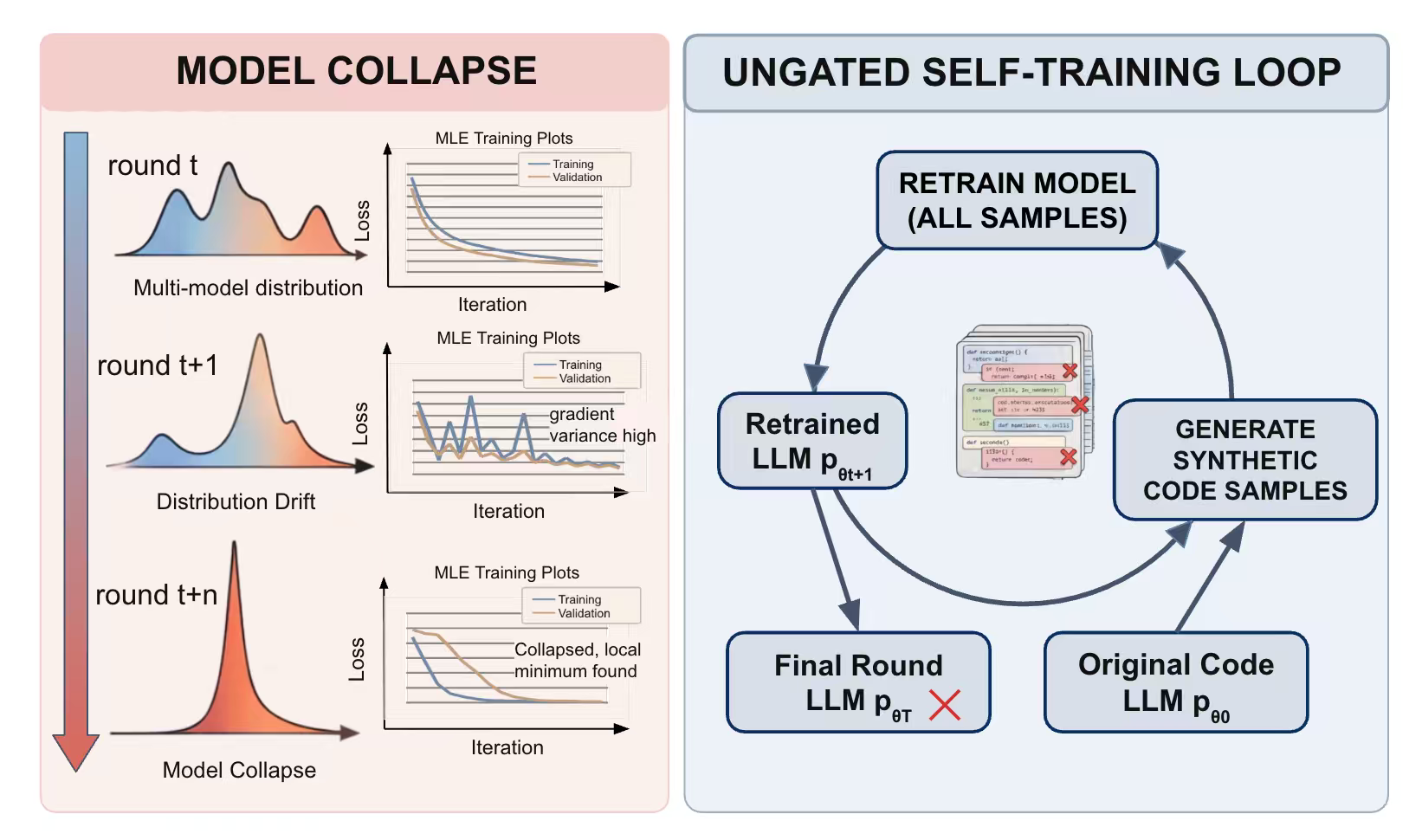}
  \end{subfigure}
  \hspace{-0.6em}
  \begin{subfigure}[t]{0.46\linewidth}
    \centering
    \includegraphics[width=\linewidth]{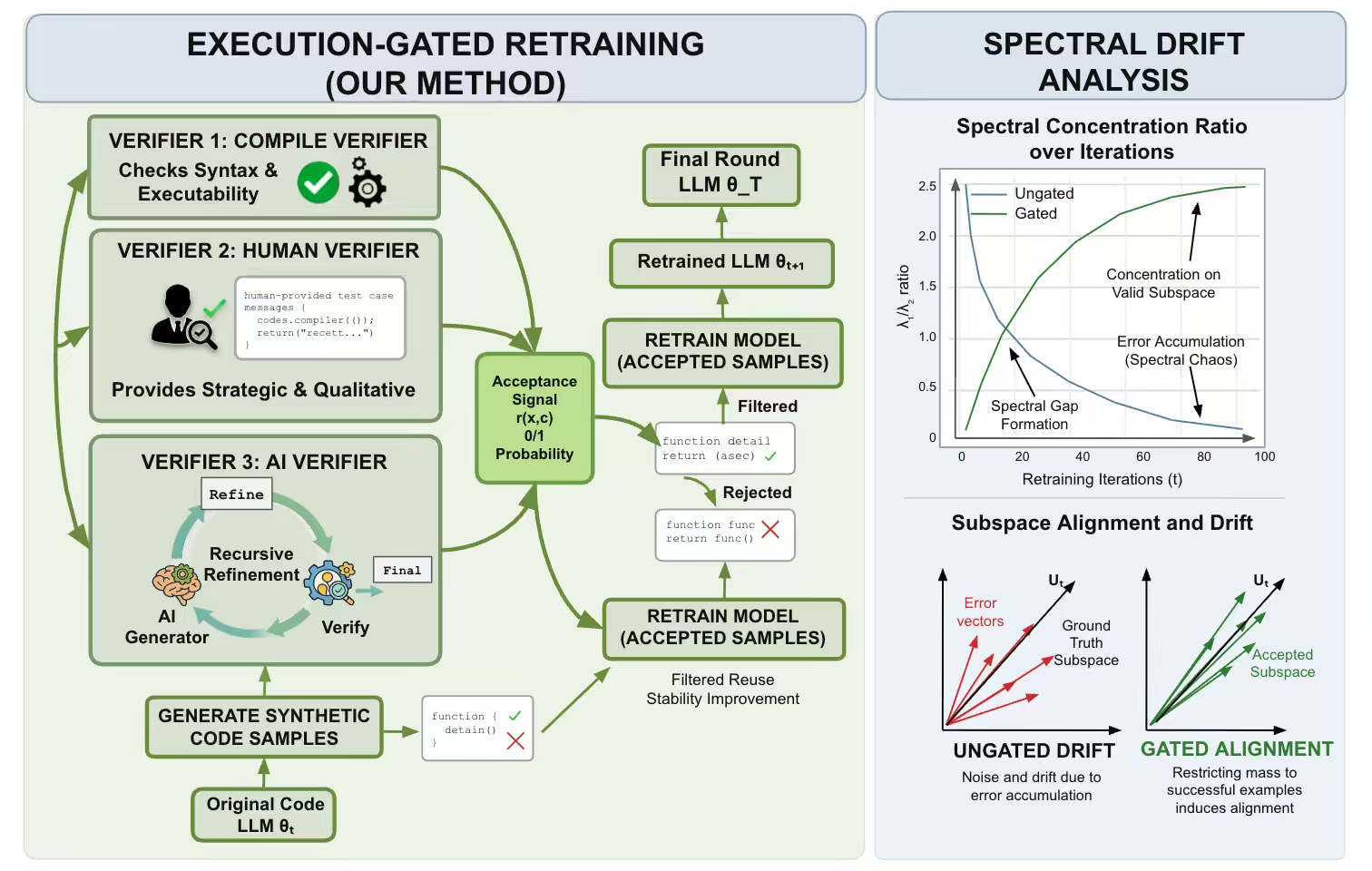}
  \end{subfigure}

  \caption{\textbf{Recursive self-training without vs.\ with gating.}
  Left: ungated training reuses all generated code and can amplify errors.
  Right: gated training filters generated code through $r(x,c)$; Human gates are exogenous, while AI self-gates are coupled to $\theta_t$.}
  \label{fig:training_loops}

\end{figure}
 
\subsection{Representation-Subspace Drift Under Recursive Self-Training}
\label{subsec:subspace_drift}
 
We next describe a representation-level view of recursive self-training collapse. Let $\mathcal{D}_t$ denote the training distribution at iteration $t$ over prompt--code pairs $(x,c)\in\mathcal{X}\times\mathcal{C}$, where $\mathcal{D}_t$ may be ungated or gated. Fix a representation map
\begin{equation}
\phi:\mathcal{X}\times\mathcal{C}\to\mathbb{R}^d,\qquad z:=\phi(x,c).
\label{eq:rep_map}
\end{equation}
The distribution $\mathcal{D}_t$ induces a mean vector and covariance matrix in representation space:
\begin{equation}
\mu_t:=\mathbb{E}_{(x,c)\sim\mathcal{D}_t}[z],\qquad \Sigma_t:=\mathbb{E}_{(x,c)\sim\mathcal{D}_t}\big[(z-\mu_t)(z-\mu_t)^\top\big]\in\mathbb{R}^{d\times d}.
\label{eq:rep_mean_cov}
\end{equation}
The covariance $\Sigma_t$ captures which representation directions remain diverse under the current training distribution. If recursive self-training narrows the generated code distribution, then $\Sigma_t$ should concentrate on fewer dominant directions.

Let $U_t\in\mathbb{R}^{d\times k}$ contain the top-$k$ eigenvectors of $\Sigma_t$, and let $P_t:=U_tU_t^\top$ be the corresponding rank-$k$ projector. We measure consecutive subspace drift by
\begingroup\small
\begin{equation}
\mathrm{Sim}(t,t+1):=\frac{1}{k}\mathrm{Tr}(P_tP_{t+1})\in[0,1],\quad \mathrm{Dist}(t,t+1):=\|P_t-P_{t+1}\|_2=\sin\theta_{\max}(U_t,U_{t+1}).
\label{eq:subspace_sim_dist}
\end{equation}
\endgroup
Here, high similarity means the dominant representation subspace is stable, while large distance indicates that recursive training has shifted the main directions of variation.

To analyze the mechanism, we use a linearized covariance recursion: 
\begin{equation}
\Sigma_{t+1}=A\Sigma_tA^\top,\qquad \Sigma_0\succeq 0.
\label{eq:noiseless}
\end{equation}
This recursion should be read as a local approximation of how one round of retraining transforms representation covariance. Let $A=U\mathrm{diag}(s_1,\ldots,s_d)V^\top$ be an SVD with $s_1\ge s_2\ge\cdots\ge s_d\ge 0$. Then $\Sigma_t=A^t\Sigma_0(A^\top)^t$ and therefore
\begin{equation}
\|\Sigma_t\|_2\le \|A^t\|_2^2\|\Sigma_0\|_2=s_1^{2t}\|\Sigma_0\|_2.
\label{eq:spec_norm_growth}
\end{equation}
More importantly, different singular directions are amplified at different rates. For any unit vector $v$,
\begin{equation}
v^\top\Sigma_t v=\|(A^\top)^tv\|_{\Sigma_0}^2,\qquad \|(A^\top)^tv\|\ \text{scales as}\ s_i^t\ \text{when }v=v_i.
\label{eq:directional_variance_growth}
\end{equation}
Thus, if $s_1\gg s_2$, the component aligned with the leading singular direction dominates after repeated retraining. Whenever $\Sigma_0$ has nonzero energy in this direction, the ratio $\lambda_1(\Sigma_t)/\lambda_2(\Sigma_t)$ is expected to increase with $t$ because the leading component is amplified roughly by $s_1^{2t}$ while the next component is amplified roughly by $s_2^{2t}$. This gives a representation-space explanation for collapse: recursive self-training can progressively concentrate variance into a low-dimensional subspace, reducing diversity in the generated code distribution.

We now formalize the concentration effect by the linearized covariance recursion in Equation~\eqref{eq:noiseless}.

\begin{proposition}[Spectral concentration]
\label{prop:spectral}
Under the covariance recursion in Equation~\eqref{eq:noiseless}, recall that $\Sigma_t=A^t\Sigma_0(A^\top)^t$. Let $A=U\mathrm{diag}(s_1,\ldots,s_d)V^\top$ be the SVD of $A$, with $s_1>s_2\ge\cdots\ge s_d\ge 0$, and define $\widetilde{\Sigma}_0:=V^\top\Sigma_0V\succeq 0$. Then $\Sigma_t=U\mathrm{diag}(s_1^t,\ldots,s_d^t)\widetilde{\Sigma}_0\mathrm{diag}(s_1^t,\ldots,s_d^t)U^\top$. If $(\widetilde{\Sigma}_0)_{11}=v_1^\top\Sigma_0v_1\ge\alpha>0$ and $\rho_2:=s_2/s_1<1$, then
\begin{equation}
\frac{\lambda_1(\Sigma_t)}{\lambda_2(\Sigma_t)}\ge\frac{\alpha}{\lambda_2(\widetilde{\Sigma}_0)+\|\Sigma_0\|_2\rho_2^{2t}}.
\label{eq:spectral_ratio}
\end{equation}
Consequently, if $\lambda_2(\widetilde{\Sigma}_0)>0$, then $\lambda_1(\Sigma_t)/\lambda_2(\Sigma_t)\to\alpha/\lambda_2(\widetilde{\Sigma}_0)$ as $t\to\infty$; if $\lambda_2(\widetilde{\Sigma}_0)=0$, the ratio grows without bound.
\end{proposition}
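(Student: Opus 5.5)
The plan is to reduce to a diagonally rescaled copy of $\widetilde{\Sigma}_0$ and then control its top two eigenvalues separately: the top one with a Rayleigh quotient, the second with a rank-one-plus-small-perturbation argument.

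\textbf{Step 1 (structural formula).} I first justify $\Sigma_t=U D_t\widetilde{\Sigma}_0 D_t U^\top$ with $D_t:=\mathrm{diag}(s_1^t,\ldots,s_d^t)$. This needs $A^t=UD_tV^\top$. That identity holds when the left and right singular frames agree, $U=V$, e.g.\ for symmetric positive semidefinite $A$. For a general $A$ it fails, since $A^2=US V^\top U S V^\top$. I will state explicitly that the proposition's display is taken under this reading and will not try to derive it for a general non-normal $A$. Since $U$ is orthogonal, $\Sigma_t$ and $M_t:=D_t\widetilde{\Sigma}_0D_t$ have the same spectrum. I then normalize, writing $M_t=s_1^{2t}E_t\widetilde{\Sigma}_0E_t$ with $E_t=\mathrm{diag}(1,(s_2/s_1)^t,\ldots,(s_d/s_1)^t)$. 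Every entry of $E_t$ after the first is at most $\rho_2^t$ in absolute value. The ratio $\lambda_1/\lambda_2$ is scale invariant, so I work with $N_t:=E_t\widetilde{\Sigma}_0E_t$.

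\textbf{Step 2 (top eigenvalue).} By the Rayleigh quotient with $e_1$,
\[
\lambda_1(N_t)\ge e_1^\top N_t e_1=(\widetilde{\Sigma}_0)_{11}\ge\alpha .
\]

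\textbf{Step 3 (second eigenvalue, the main step).} A naive Weyl perturbation of $E_t=e_1e_1^\top+R_t$ yields cross terms of order $\rho_2^t$, which is too weak. Instead I use the fact that $E\Sigma E$ and $\Sigma^{1/2}E^2\Sigma^{1/2}$ have the same nonzero eigenvalues. This gives $\lambda_2(N_t)=\lambda_2\big(\widetilde{\Sigma}_0^{1/2}E_t^2\widetilde{\Sigma}_0^{1/2}\big)$. Now split $E_t^2=e_1e_1^\top+R_t^2$ with $0\preceq R_t^2\preceq\rho_2^{2t}I$. The matrix $\widetilde{\Sigma}_0^{1/2}e_1e_1^\top\widetilde{\Sigma}_0^{1/2}$ has rank one, so its second eigenvalue is $0$. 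Weyl's inequality then gives
\[
\lambda_2(N_t)\le 0+\rho_2^{2t}\|\widetilde{\Sigma}_0\|_2=\rho_2^{2t}\|\Sigma_0\|_2 .
\]
This is at most $\lambda_2(\widetilde{\Sigma}_0)+\|\Sigma_0\|_2\rho_2^{2t}$. Combining with Step 2 yields \eqref{eq:spectral_ratio}. If $\lambda_2(N_t)=0$, I read the ratio as $+\infty$ and the bound holds trivially.

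\textbf{Step 4 (consequences).} When $\lambda_2(\widetilde{\Sigma}_0)=0$, the right-hand side of \eqref{eq:spectral_ratio} is $\alpha/(\|\Sigma_0\|_2\rho_2^{2t})$, which grows without bound. When $\lambda_2(\widetilde{\Sigma}_0)>0$, the right-hand side converges to $\alpha/\lambda_2(\widetilde{\Sigma}_0)$. The stated limit should therefore be read as a limit of this lower bound, i.e.\ $\liminf_t\lambda_1(\Sigma_t)/\lambda_2(\Sigma_t)\ge\alpha/\lambda_2(\widetilde{\Sigma}_0)$. It is not a limit of the ratio itself: the sharper estimate $\lambda_2(N_t)\le\|\Sigma_0\|_2\rho_2^{2t}$ from Step 3 shows the ratio actually diverges in every case with $\alpha>0$. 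I will point this out rather than attempt to prove equality of the limit.

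The main obstacle is Step 3: obtaining the $\rho_2^{2t}$ rate rather than $\rho_2^t$. The similarity trick that moves $E_t$ to the middle of the product is what makes the rank-one-plus-small decomposition work. A secondary issue is flagging the normality assumption needed in Step 1.
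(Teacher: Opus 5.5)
Your proposal is correct. It shares the paper's skeleton: pass to $D_t\widetilde{\Sigma}_0D_t$, use a Rayleigh quotient at $e_1$ for $\lambda_1$, and use Weyl for $\lambda_2$. The $\lambda_2$ step, however, is genuinely different, and yours is the version that actually holds. The paper writes $\Sigma_t=s_1^{2t}U(\widetilde{\Sigma}_0+E_t)U^\top$. There $E_t$ is the \emph{perturbation} $\mathrm{diag}(1,\rho_2^t,\ldots,\rho_d^t)\,\widetilde{\Sigma}_0\,\mathrm{diag}(1,\rho_2^t,\ldots,\rho_d^t)-\widetilde{\Sigma}_0$, not your diagonal $E_t$. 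It then applies Weyl around the full matrix $\widetilde{\Sigma}_0$ using $\|E_t\|_2\le\|\Sigma_0\|_2\rho_2^{2t}$.

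That estimate is false. First, $\|\mathrm{diag}(1,\rho_2^t,\ldots,\rho_d^t)^2-I\|_2=1-\rho_d^{2t}$, not $\rho_2^{2t}$. Second, the perturbation tends to $(\widetilde{\Sigma}_0)_{11}e_1e_1^\top-\widetilde{\Sigma}_0\neq0$; for $\widetilde{\Sigma}_0=I_2$ its norm is exactly $1-\rho_2^{2t}$. So the paper's final inequality is true, but its derivation is not.

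Your step moves the scaling into the middle, $\lambda_2(E_t\widetilde{\Sigma}_0E_t)=\lambda_2(\widetilde{\Sigma}_0^{1/2}E_t^2\widetilde{\Sigma}_0^{1/2})$, and then perturbs around a rank-one matrix. This gives the correct and stronger bound $\lambda_2(N_t)\le\|\Sigma_0\|_2\rho_2^{2t}$. The same conclusion follows in one line from Courant--Fischer on $e_1^\perp$, since $x\perp e_1$ implies $\|E_tx\|\le\rho_2^t\|x\|$. Your swap also gives $\widetilde{\Sigma}_0^{1/2}E_t^2\widetilde{\Sigma}_0^{1/2}\preceq\widetilde{\Sigma}_0$, hence $\lambda_2(N_t)\le\lambda_2(\widetilde{\Sigma}_0)$. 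So each summand in the paper's denominator is separately an upper bound, and their sum is simply loose.

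Your two caveats are also warranted, and the paper addresses neither.
\begin{itemize}
\item The paper asserts $\Sigma_t=UD_t\widetilde{\Sigma}_0D_tU^\top$ without justification, yet this fails for non-normal $A$. For example, $A=e_1e_2^\top$ has $s_1=1>s_2=0$ and $UD_tV^\top=A$ for every $t\ge1$, while $A^2=0$.
\item Its proof contains no argument for the limit clause. As you observe, under the structural formula the ratio diverges whenever $\alpha>0$. The quantity $\alpha/\lambda_2(\widetilde{\Sigma}_0)$ depends on the arbitrary lower bound $\alpha$, so it can only be the limit of the right-hand side of the bound, not of the ratio itself.
\end{itemize}
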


The proof is given in Appendix~\ref{app:proofs}. Proposition~\ref{prop:spectral} shows that recursive retraining can amplify the dominant representation direction faster than all others, at a relative rate controlled by $\rho_2^{2t}$. This gives a representation-level mechanism for mode concentration: the model distribution can become increasingly low-rank even while training continues to optimize self-generated samples.

\subsection{Human Gate vs.\ AI Self-Gate}
\label{subsec:gate_taxonomy}
 
\subsubsection{Verifier parameterization: exogenous Human gate vs.\ endogenous AI self-gate}
 
We now specialize the generic acceptance function $r(x,c)$ from Equation~\eqref{eq:acceptance_score}. The key is whether the acceptance rule is fixed outside the generator or coupled to the generator during recursive training.

\paragraph{Human gate: exogenous acceptance.}
A Human gate uses a fixed acceptance function $r_H(x,c)\in[0,1]$, $r_H$ is independent of $t$ and $\theta_t$. It induces the accepted conditional distribution
\begin{equation}
q^H_{\theta_t}(c\mid x):=\frac{p_{\theta_t}(c\mid x)r_H(x,c)}{Z^H_{\theta_t}(x)},\quad Z^H_{\theta_t}(x):=\sum_{c'\in\mathcal C}p_{\theta_t}(c'\mid x)r_H(x,c').
\label{eq:human_gated_conditional}
\end{equation}
The corresponding training distribution is $m_t^H(x,c)=p_X(x)q^H_{\theta_t}(c\mid x)$. This covers model-independent review signals such as compilation, execution, quality checks, or human PR review.

\paragraph{AI self-gate: endogenous acceptance.}
An AI self-gate uses a learned reviewer with parameters $\phi_t$, producing $r_{\phi_t}(x,c)\in[0,1]$. It induces
\begin{equation}
q^A_{\theta_t,\phi_t}(c\mid x):=\frac{p_{\theta_t}(c\mid x)r_{\phi_t}(x,c)}{Z^A_{\theta_t,\phi_t}(x)},\quad Z^A_{\theta_t,\phi_t}(x):=\sum_{c'\in\mathcal C}p_{\theta_t}(c'\mid x)r_{\phi_t}(x,c').
\label{eq:ai_gated_conditional}
\end{equation}
The corresponding training distribution is $m_t^A(x,c)=p_X(x)q^A_{\theta_t,\phi_t}(c\mid x)$. In our experiments, this case corresponds to perplexity filtering and binary self-scoring, where the code LLM evaluates its own generated samples. Unlike $r_H$, the score $r_{\phi_t}$ can drift as the generator changes.

\paragraph{Taxonomy summary.}
Table~\ref{tab:gate_taxonomy} classifies the six filtering strategies by whether the acceptance rule is exogenous or endogenous. Vanilla has no gate and accepts all generated code. Compile is a Human gate because it directly runs Python compilation and accepts code only when compilation succeeds. Quality is also a Human gate because it uses fixed code-quality rules, described in Section~\ref{app:quality_feasible}, such as repetition rate and length. Compile+Quality combines these two model-independent rules, so it is also exogenous. Perplexity and Binary Classifier are AI self-gates: both use the code LLM itself to score its own generated code. Perplexity filters by the model's own likelihood, while Binary Classifier uses the model's own logit-difference score, defined in Section~\ref{app:selfeval}. Since the generator also provides the filtering signal, these two gates are coupled to $\theta_t$.

\begin{table}[h]
\centering
\caption{Human gate vs.\ AI self-gate taxonomy of all filtering strategies studied.}
\label{tab:gate_taxonomy}
\small
\setlength{\tabcolsep}{5pt}
\begin{tabular}{llll}
\toprule
\hline
\textbf{Method} & \textbf{Gate type} & \textbf{Acceptance signal} & \textbf{Coupled to $\theta_t$?} \\
\midrule
Vanilla            & -- (no gate)     & Accept all                         & No  \\
Compile            & Human gate       & Compilability ($\texttt{compile}()$) & No  \\
Quality            & Human gate       & Repetition rate + length (no compile) & No \\
Compile+Quality    & Human gate       & Compile + repetition/length filters   & No \\
Perplexity         & AI self-gate     & Model perplexity (lowest 25\%)     & Yes \\
Binary Classifier  & AI self-gate     & Logit-difference score (top 25\%)  & Yes \\
\hline
\bottomrule
\end{tabular}

\end{table}

\paragraph{Coupled dynamics under AI self-gating.}
We now formalize the failure mode where the verifier is coupled to the generator. This captures the AI self-gate setting: the code LLM generates candidate code, scores its own outputs, and then trains on the samples it accepted. For a clean statement, we analyze the extreme case where the generator and verifier share the same parameters:
\begin{equation}
\phi_t=\theta_t,\qquad r_{\phi_t}(x,c)=r_{\theta_t}(x,c).
\label{eq:self_gate_shared_params}
\end{equation}
\begin{assumption}[Self-confirming acceptance]
\label{assump:self_confirming}
There exists a measurable function $\kappa:\mathcal{X}\to(0,1]$ such that for $p_X$-almost every $x$, $r_{\theta_t}(x,c)=\kappa(x)$ for $p_{\theta_t}(\cdot\mid x)$-almost every $c$.
\end{assumption}
Assumption~\ref{assump:self_confirming} describes the rubber-stamp regime: on the code that the current model is likely to generate, the AI reviewer assigns essentially the same acceptance score. Once this happens, the gate no longer changes the training distribution.
\begin{theorem}[Degeneracy to ungated recursion]
\label{thm:gating_degenerate}
Assume the shared-parameter self-gate in Equation~\eqref{eq:self_gate_shared_params} and Assumption~\ref{assump:self_confirming}. Then for $p_X$-almost every $x$,
\begin{equation}
Z^A_{\theta_t,\theta_t}(x):=\sum_{c'\in\mathcal{C}}p_{\theta_t}(c'\mid x)r_{\theta_t}(x,c')=\kappa(x),
\label{eq:degenerate_Z}
\end{equation}
and for $p_{\theta_t}(\cdot\mid x)$-almost every $c$,
\begin{equation}
q^A_{\theta_t,\theta_t}(c\mid x):=\frac{p_{\theta_t}(c\mid x)r_{\theta_t}(x,c)}{Z^A_{\theta_t,\theta_t}(x)}=p_{\theta_t}(c\mid x).
\label{eq:degenerate_q}
\end{equation}
Consequently,
\begin{equation}
m_t^A(x,c):=p_X(x)q^A_{\theta_t,\theta_t}(c\mid x)=p_X(x)p_{\theta_t}(c\mid x)=m_t^{\mathrm{ungated}}(x,c)\qquad\text{a.e.},
\label{eq:degenerate_m}
\end{equation}
and the gated \textbf{MLE} update coincides with the ungated self-training update:
\begin{equation}
\arg\max_{\theta}\mathbb{E}_{(x,c)\sim m_t^A}[\log p_\theta(c\mid x)]=\arg\max_{\theta}\mathbb{E}_{(x,c)\sim m_t^{\mathrm{ungated}}}[\log p_\theta(c\mid x)].
\label{eq:degenerate_update}
\end{equation}
\end{theorem}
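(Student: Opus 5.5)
The argument is a direct computation in four steps, proved in the order the statement lists them. Fix an $x$ in the full-$p_X$-measure set where Assumption~\ref{assump:self_confirming} holds. Let $S_x:=\{c: r_{\theta_t}(x,c)\neq\kappa(x)\}$; by assumption $p_{\theta_t}(S_x\mid x)=0$. Since $\mathcal C$ is countable (sums are used throughout), this means $p_{\theta_t}(c\mid x)=0$ for every $c\in S_x$. This pointwise reading of the null set is what removes the measure-theoretic subtlety.

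\textbf{Step 1: the normalizer.} Split $Z^A_{\theta_t,\theta_t}(x)=\sum_{c'\notin S_x}p_{\theta_t}(c'\mid x)\kappa(x)+\sum_{c'\in S_x}0\cdot r_{\theta_t}(x,c')$. Because $r\in[0,1]$, each term of the second sum is bounded and multiplied by zero, so that sum vanishes. The first sum equals $\kappa(x)\,p_{\theta_t}(\mathcal C\setminus S_x\mid x)=\kappa(x)$, which gives \eqref{eq:degenerate_Z}. Since $\kappa(x)>0$, the conditional $q^A$ is well defined; this is precisely why the assumption requires $\kappa$ to take values in $(0,1]$.

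\textbf{Step 2: the accepted conditional.} For $c\notin S_x$, the numerator is $p_{\theta_t}(c\mid x)\kappa(x)$. Dividing by $\kappa(x)$ gives \eqref{eq:degenerate_q}. For $c\in S_x$, both sides are $0$, so equality in fact holds for every $c$, not just almost every $c$. That is slightly stronger than stated and avoids any need to worry about what $q^A$ does off the support. Multiplying by $p_X(x)$ and taking the union over the $p_X$-null exceptional set of $x$ yields \eqref{eq:degenerate_m}, with $p_X\otimes p_{\theta_t}$-a.e. equality, and in fact $p_X$-a.e. equality in $x$ for all $c$.

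\textbf{Step 3: the MLE objectives.} Write each objective as $\sum_x p_X(x)\sum_c q(c\mid x)\log p_\theta(c\mid x)$. The two measures $m_t^A$ and $m_t^{\mathrm{ungated}}$ agree except on a set that is null for both. Hence the two expectations are equal as functions of $\theta$, with values in $[-\infty,0]$, so they are always defined as extended reals. Identical objective functions have identical argmax sets, which is \eqref{eq:degenerate_update}.

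The only genuine obstacle is bookkeeping of null sets: ensuring that the exceptional $c$-sets, which depend on $x$, do not contribute, and that expectations involving $\log p_\theta=-\infty$ are handled consistently. Both are resolved by the countability of $\mathcal C$ and the fact that the two measures coincide. For a general measurable $\mathcal C$, I would instead replace the sums by integrals and argue via Radon--Nikodym derivatives: $dq^A/dp_{\theta_t}=r/\kappa=1$ a.e.
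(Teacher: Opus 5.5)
Your proposal is correct and follows the paper's proof essentially step for step. Like the paper, you use the a.e.\ constancy $r_{\theta_t}(x,\cdot)=\kappa(x)$ to get $Z^A_{\theta_t,\theta_t}(x)=\kappa(x)$, cancel $\kappa(x)>0$ to obtain $q^A=p_{\theta_t}$, multiply by $p_X$, and conclude that the MLE objectives, and hence their argmax sets, coincide; your additional bookkeeping (reading the null set pointwise via countability of $\mathcal C$ so that $q^A=p_{\theta_t}$ holds for every $c$, and noting the objectives take values in $[-\infty,0]$) only makes explicit what the paper leaves implicit.
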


The proof is given in Appendix~\ref{app:proofs}. Theorem~\ref{thm:gating_degenerate} states that once an AI self-gate becomes constant on the generator's own support, filtering becomes mathematically identical to no filtering. This is exactly the recursive self-training failure case where the model writes code, approves its own code, and then trains on that approved code.

\begin{corollary}[Collapsed fixed point]
\label{cor:collapsed_fp}
If there exists $(\theta^\star,\phi^\star)$ such that $r_{\phi^\star}(x,c)\equiv 1$ for all $(x,c)$, then for all $x$,
\begin{equation}
q^A_{\theta^\star,\phi^\star}(c\mid x)=p_{\theta^\star}(c\mid x),\qquad m^A(x,c)=p_X(x)p_{\theta^\star}(c\mid x).
\label{eq:collapsed_fixed_point}
\end{equation}
Thus, the gating operator is the identity on $p_{\theta^\star}(\cdot\mid x)$.
\end{corollary}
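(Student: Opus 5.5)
The corollary follows by direct substitution into the definition of the AI-gated conditional in Equation~\eqref{eq:ai_gated_conditional}; no limiting argument or support condition is needed. It can also be seen as the special case $\kappa\equiv 1$ of Theorem~\ref{thm:gating_degenerate}. I will argue it directly, because the corollary does not assume shared parameters $\phi=\theta$ and holds for every $c$, not only almost everywhere.

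\textbf{Step 1: the normalizer.} Fix any $x$. Since $r_{\phi^\star}(x,c')=1$ for every $c'$, I compute
\[
Z^A_{\theta^\star,\phi^\star}(x)=\sum_{c'\in\mathcal C}p_{\theta^\star}(c'\mid x)\cdot 1=1,
\]
using only that $p_{\theta^\star}(\cdot\mid x)$ is a probability distribution on $\mathcal C$. In particular $Z^A_{\theta^\star,\phi^\star}(x)>0$, so the accepted distribution is well defined for every $x$.

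\textbf{Step 2: the accepted conditional and the training distribution.} Substituting into Equation~\eqref{eq:ai_gated_conditional} gives
\[
q^A_{\theta^\star,\phi^\star}(c\mid x)=\frac{p_{\theta^\star}(c\mid x)\cdot 1}{1}=p_{\theta^\star}(c\mid x)
\]
for all $(x,c)$. Multiplying by $p_X(x)$ then yields $m^A(x,c)=p_X(x)p_{\theta^\star}(c\mid x)$, which is $m^{\mathrm{ungated}}$ evaluated at $\theta^\star$.

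\textbf{Step 3: the gating operator.} The map $p\mapsto p\,r_{\phi^\star}/\sum_{c'} p(c')\,r_{\phi^\star}(x,c')$ returns its input unchanged. It is therefore the identity, and in particular it fixes $p_{\theta^\star}(\cdot\mid x)$. As a remark, the MLE update from $m^A$ coincides with the ungated update, by the same argument as Equation~\eqref{eq:degenerate_update}.

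There is no real obstacle here. The only care needed is to note that the normalizer is nonzero, which is why the conclusion holds for all $x$ rather than almost everywhere, and to phrase "identity" as a statement about the reweighting map rather than about the parameter dynamics.
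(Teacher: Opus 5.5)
Your argument is correct, and it is the same substitution the paper uses to prove Theorem~\ref{thm:gating_degenerate}: compute the normalizer ($Z=\kappa$), then get $q=p\kappa/\kappa=p$. You simply specialize it to $\kappa\equiv 1$; the appendix has no separate proof of this corollary, so that specialization is presumably the intended route. Arguing directly is the sounder choice, because a literal appeal to the theorem would require the shared-parameter gate $\phi=\theta$ and give conclusions only almost everywhere, whereas the corollary is stated for an arbitrary pair $(\theta^\star,\phi^\star)$ and for every $x$ and $c$.
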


The proof is given in Appendix~\ref{app:proofs}.
 
\subsubsection{When AI review matches human review: equivalence and a non-collapse condition}
 
The previous theorem gives the failure case: self-confirming AI review degenerates to no review. We now give the non-collapse case: if AI review stays close to exogenous human review and the human gate accepts nonzero good-code mass, then AI-gated training remains close to human-gated training.

\paragraph{Uniform calibration error.}
Recall the Human-gate conditional $q^H_{\theta_t}$ in Equation~\eqref{eq:human_gated_conditional} and the AI-self-gate conditional $q^A_{\theta_t,\phi_t}$ in Equation~\eqref{eq:ai_gated_conditional}. Here, $r_H(x,c)$ denotes the human-style external review rule, while $r_{\phi_t}(x,c)$ denotes AI review: the learned reviewer scores whether generated code should be accepted for future training. We measure how well the AI reviewer matches the human reviewer by the worst-case calibration error
\begin{equation}
\varepsilon_t:=\sup_{x\in\mathcal X,c\in\mathcal C}\big|r_{\phi_t}(x,c)-r_H(x,c)\big|.
\label{eq:eps_def}
\end{equation}
\paragraph{Non-collapse human acceptance mass.}
To avoid a degenerate case where almost no generated code is accepted by the human reviewer, we require the Human gate to accept a non-vanishing fraction of the generator's probability mass.
\begin{assumption}[Non-collapse human acceptance mass]
\label{assump:z0}
There exists $z_0\in(0,1]$ such that
\begin{equation}
Z^H_{\theta_t}(x)=\sum_{c'}p_{\theta_t}(c'\mid x)r_H(x,c')\ge z_0\qquad\text{for }p_X\text{-almost every }x\text{ and all }t.
\label{eq:z0}
\end{equation}
\end{assumption}
Assumption~\ref{assump:z0} says that the external reviewer still finds enough acceptable code in the generator distribution. Without this lower bound, the accepted distribution can become unstable because it is normalized by a vanishing acceptance mass.

\paragraph{Quality gate sufficient condition.}
A simple sufficient condition for Assumption~\ref{assump:z0} follows from the quality gate set $\mathcal F_{\mathrm{qual}}(x)$. Suppose the human reviewer gives nontrivial acceptance probability to quality gate code: $c\in\mathcal F_{\mathrm{qual}}(x)\Longrightarrow r_H(x,c)\ge\delta_H$ for some $\delta_H\in(0,1]$. If the generator also assigns nontrivial mass to quality gate code, $p_{\theta_t}\big(\mathcal F_{\mathrm{qual}}(x)\mid x\big)\ge\beta_H$ for some $\beta_H>0$, then Assumption~\ref{assump:z0} holds with $z_0=\delta_H\beta_H$, since
\begingroup\footnotesize
\begin{equation}
Z^H_{\theta_t}(x)=\sum_{c'}p_{\theta_t}(c'\mid x)r_H(x,c')\ge\sum_{c'\in\mathcal F_{\mathrm{qual}}(x)}p_{\theta_t}(c'\mid x)r_H(x,c')\ge\delta_Hp_{\theta_t}\big(\mathcal F_{\mathrm{qual}}(x)\mid x\big)\ge\delta_H\beta_H.
\label{eq:z0_from_quality_feasible}
\end{equation}
\endgroup
Thus, the Human gate remains stable when it accepts quality gate code and the generator has not lost all probability mass on such code.

\begin{theorem}[Uniform approximation of human-gated review]
\label{thm:equiv}
Suppose the AI reviewer satisfies Equation~\eqref{eq:eps_def} with $\varepsilon_t<z_0$, and suppose Assumption~\ref{assump:z0} holds. Then for $p_X$-almost every $x$,
\begin{equation}
\big\|q^A_{\theta_t,\phi_t}(\cdot\mid x)-q^H_{\theta_t}(\cdot\mid x)\big\|_1\le\frac{2\varepsilon_t}{z_0}.
\label{eq:q_bound}
\end{equation}
Consequently,
\begin{equation}
\big\|m_t^A-m_t^H\big\|_1=\int_{\mathcal X}p_X(x)\big\|q^A_{\theta_t,\phi_t}(\cdot\mid x)-q^H_{\theta_t}(\cdot\mid x)\big\|_1dx\le\frac{2\varepsilon_t}{z_0}.
\label{eq:m_bound}
\end{equation}
\end{theorem}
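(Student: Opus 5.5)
The plan is to prove the conditional bound in Equation~\eqref{eq:q_bound} pointwise in $x$ with an add-and-subtract decomposition, and then integrate over $x$. Fix $x$ outside the $p_X$-null set where Assumption~\ref{assump:z0} may fail. Abbreviate the unnormalized weights as
\[
a(c):=p_{\theta_t}(c\mid x)\,r_{\phi_t}(x,c), \qquad h(c):=p_{\theta_t}(c\mid x)\,r_H(x,c),
\]
and the normalizers as $Z_A:=Z^A_{\theta_t,\phi_t}(x)=\sum_c a(c)$ and $Z_H:=Z^H_{\theta_t}(x)=\sum_c h(c)$.

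First I check that $q^A$ is well defined. Because $\sum_c p_{\theta_t}(c\mid x)=1$, Equation~\eqref{eq:eps_def} gives
\[
|Z_A-Z_H|\le\sum_c p_{\theta_t}(c\mid x)\,|r_{\phi_t}(x,c)-r_H(x,c)|\le\varepsilon_t .
\]
Assumption~\ref{assump:z0} gives $Z_H\ge z_0$, so $Z_A\ge z_0-\varepsilon_t>0$, which uses the hypothesis $\varepsilon_t<z_0$. The same computation also gives $\|a-h\|_1\le\varepsilon_t$.

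Next I split the difference as
\[
\frac{a}{Z_A}-\frac{h}{Z_H}=\frac{a-h}{Z_H}+a\Big(\frac{1}{Z_A}-\frac{1}{Z_H}\Big).
\]
The first term has $\ell_1$ norm at most $\|a-h\|_1/Z_H\le\varepsilon_t/z_0$. For the second term, since $a\ge 0$ and $\|a\|_1=Z_A$, its $\ell_1$ norm equals
\[
Z_A\cdot\frac{|Z_H-Z_A|}{Z_AZ_H}=\frac{|Z_H-Z_A|}{Z_H}\le\frac{\varepsilon_t}{z_0}.
\]
The triangle inequality then yields Equation~\eqref{eq:q_bound} with constant $2\varepsilon_t/z_0$. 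The only delicate point is the choice of decomposition. Normalizing the error by $Z_H$ (bounded below by $z_0$) rather than by $Z_A$ avoids a weaker $z_0-\varepsilon_t$ denominator; the $Z_A$ factors cancel exactly in the second term.

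Finally, the joint measures factor as $m_t^A=p_Xq^A$ and $m_t^H=p_Xq^H$ with the same marginal. Hence $\|m_t^A-m_t^H\|_1=\int p_X(x)\|q^A(\cdot\mid x)-q^H(\cdot\mid x)\|_1\,dx$, which is the identity stated in Equation~\eqref{eq:m_bound}. Since the pointwise bound holds $p_X$-a.e. and $p_X$ integrates to one, integrating it gives Equation~\eqref{eq:m_bound}.
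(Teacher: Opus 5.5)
Your proof is correct. It differs from the paper's at exactly the step that matters, and your version is the one that actually gives the stated constant.

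\textbf{What the paper does.} The paper adds and subtracts $r_H(x,c)/Z^A_{\theta_t,\phi_t}(x)$, so both resulting terms are normalized by $Z^A$. This gives $\|q^A-q^H\|_1\le 2\varepsilon_t/Z^A\le 2\varepsilon_t/(z_0-\varepsilon_t)$. The paper then asserts this is at most $2\varepsilon_t/z_0$. That inequality runs the wrong way: for $\varepsilon_t>0$ we have $z_0-\varepsilon_t<z_0$. The accompanying remark about $z_0-\varepsilon_t\ge z_0/2$ does not rescue it. That bound holds only when $\varepsilon_t\le z_0/2$, and then it gives $4\varepsilon_t/z_0$, not $2\varepsilon_t/z_0$.

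\textbf{What you do instead.} You insert $a/Z_H$, i.e.\ $p_{\theta_t}(c\mid x)\,r_{\phi_t}(x,c)/Z^H_{\theta_t}(x)$. Two things follow:
\begin{itemize}
\item The error term $a-h$ is divided by $Z_H$, which Assumption~\ref{assump:z0} bounds below directly.
\item In the normalization term, the mass $\|a\|_1=Z_A$ cancels the $Z_A$ in the denominator, leaving $|Z_H-Z_A|/Z_H\le\varepsilon_t/z_0$.
\end{itemize}
The sum is exactly $2\varepsilon_t/z_0$, so your argument establishes Equation~\eqref{eq:q_bound} as stated. The hypothesis $\varepsilon_t<z_0$ is then needed only to guarantee $Z_A>0$, so that $q^A$ is well defined.

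\textbf{How the two relate.} The decompositions are mirror images. The paper's would be the right choice if the lower bound were on the AI acceptance mass $Z^A$. Yours is the right choice given a lower bound on the human acceptance mass $Z^H$, which is what the theorem assumes. The final integration over $x$ is the same in both.
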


The proof is in Appendix~\ref{app:proofs}. Theorem~\ref{thm:equiv} shows when AI review remains safe: the AI reviewer must stay close to human review, and the Human gate must accept nonzero generator mass. If $\varepsilon_t$ grows or $z_0$ shrinks, the bound weakens and AI review can approach ungated self-training.

Equivalently, AI-gated recursion avoids collapse only when $\varepsilon_t\ll z_0$ and $r_H(x,\cdot)$ is not approximately constant on $\mathrm{supp}(p_{\theta_t}(\cdot\mid x))$. These conditions mean that AI review remains calibrated to human review, and human review still separates good from bad generated code.

\subsection{Quality Gate}
\label{subsec:quality_feasible}

The ideal gate for recursive code training would accept exactly the programs in the \emph{quality gate set} $\mathcal{F}_{\mathrm{qual}}(q)$: programs satisfying semantic correctness, runtime safety, efficiency, security/compliance, and readability requirements. We define this set and its quality dimensions (A)--(G) formally in Appendix~\ref{app:quality_feasible}. In our experiments, Human-gate methods are concrete approximations to this ideal: the compile filter checks basic runtime validity, while the quality filter implements fixed code-quality rules such as repetition and length constraints. These model-independent checks remain stable as the generator degrades. In contrast, AI self-gates use model-derived scores that can drift with the generator, causing the gate to lose its connection to $\mathcal{F}_{\mathrm{qual}}(q)$ as in Theorem~\ref{thm:gating_degenerate}.

\section{Experiments}
\label{sec:experiments}

\subsection{Evaluation Setup}
\label{subsec:setup}

\paragraph{Evaluation benchmarks.}
We use three standard code generation benchmarks.
\textbf{HumanEval}~\cite{human_eval,openai_humaneval}: 164 hand-written Python problems
with 7--9 unit tests each; primary benchmark, metric = pass@1.
\textbf{MBPP}~\cite{mbpp}: 374 crowd-sourced introductory Python problems with 3 unit tests each
(sanitized split); second primary benchmark.
\textbf{LiveCodeBench}~\cite{livecodebench}: continuously updated competition problems that prevent
data contamination; used as a robustness probe.
We additionally report \textbf{HumanEval+} and \textbf{MBPP+}~\cite{liu2024humaneval}
(augmented edge-case test suites) in per-round tables (Appendix~\ref{app:per_round_tables}).
All evaluations use \texttt{bigcode-evaluation-harness}~\cite{bigcode_eval_harness} with
temperature 0.8, top-$p$ 0.95; HumanEval+ and MBPP+ use the EvalPlus framework.

\paragraph{Base models.}
We evaluate four code LLMs:
\textbf{SantaCoder}~(1.1B)~\cite{benallal2023santacoder} (HumanEval+ baseline: 0.171, MBPP+ baseline: 0.294),
\textbf{StarCoder2-3B}~\cite{lozhkov2024starcoder2} (HumanEval+ baseline: 0.274, MBPP+ baseline: 0.492),
\textbf{Qwen2.5-Coder-1.5B}~\cite{hui2024qwen25coder} (HumanEval+ baseline: 0.372, MBPP+ baseline: 0.582),
and \textbf{Code Llama-7B}~\cite{roziere2023codellama} (HumanEval+ baseline: 0.287, MBPP+ baseline: 0.388).
All five primary recursive regimes are evaluated on all four models for 5 recursive rounds.

\paragraph{Recursive pipeline.}
At each round $t$, model $M_t$ generates synthetic code from a fixed prompt pool;
the filtering gate (Human or AI self-gate; Table~\ref{tab:gate_taxonomy}) selects
$5{,}000$ accepted samples; $M_{t+1}$ is fine-tuned on the accepted set for 3{,}000 steps
(so 5 rounds = 15k steps).
Human gate methods generate in batches until 5{,}000 accepted;
AI self-gate methods generate 20k once per round and retain the top/bottom 25\%.
All methods share the same optimizer, learning rate cosine schedule, and batch size---only
the filtering policy differs.
Hardware: NVIDIA RTX 6000 Ada Generation GPUs.
Full implementation details are in Appendix~\ref{app:impl}.

\paragraph{Filtering strategies.}
We evaluate five primary regimes under the Human gate / AI self-gate taxonomy (Table~\ref{tab:gate_taxonomy}).
\textbf{No gate:}
(1)~\textbf{Vanilla} reuses all generated samples.
\textbf{Human gates:}
(2)~\textbf{Compile} applies a syntax gate via \texttt{compile()};
(3)~\textbf{Quality} applies fixed static metrics (repetition rate $\leq 0.3$, length $\geq 50$ tokens).
\textbf{AI self-gates:}
(4)~\textbf{Perplexity} generates 20k samples and retains the lowest 25\% by masked-completion perplexity;
(5)~\textbf{Binary Classifier} generates 20k samples and retains the top 25\% by logit-difference score.
Execution-validated training pipelines have been effective in code generation~\cite{le2022coderl,wei2024selfcodealign,gehring2024rlef,zheng2024opencodeinterpreter}; our experiments isolate the narrower question of whether such exogenous signals remain different from model-coupled self-review under recursive retraining.
Extended Compile+Quality and Quality-20R trajectories are reported in Appendix~\ref{app:per_round_tables}.

\subsection{Main Collapse Results}
\label{subsec:main_results}

\paragraph{Universal degradation.}
Figure~\ref{fig:collapse_overview} is the central result of this paper.
All four models decline from their pre-training baselines over 5 recursive rounds, even when generated code is filtered before reuse.
This establishes the main empirical pattern: \emph{recursive self-training collapse persists across model families and gate types}.
Human gates generally preserve more executable and syntactically plausible code, especially early in the loop, while AI self-gates can select useful samples on HumanEval+ but become less reliable as their scoring distribution follows the generator.

\begin{figure}[t]
  \centering
  \includegraphics[width=\linewidth]{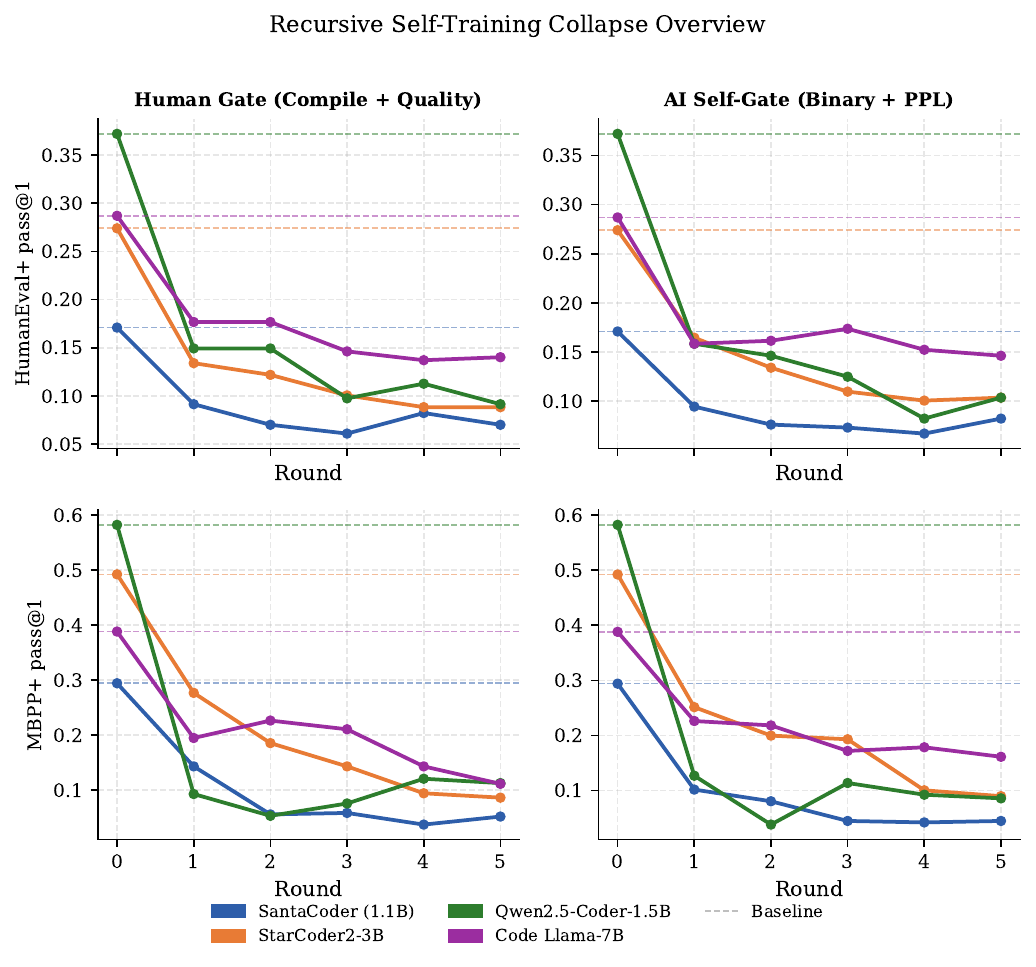}
  \caption{%
    \textbf{Recursive self-training collapse overview: HumanEval+ and MBPP+ across gate types and models.}
    Each panel shows all four model trajectories from Round~0 (pre-training baseline) to Round~5.
    Left column: Human gate (Compile + Quality average); right column: AI self-gate (Binary + PPL average).
    Top row: HumanEval+; bottom row: MBPP+.
    Dashed horizontal lines denote model-specific baselines. All models degrade substantially, showing that filtering changes the rate and shape of collapse but does not eliminate it.
  }
  \label{fig:collapse_overview}
\end{figure}

\paragraph{Filter comparison at R1 and R5.}
Figure~\ref{fig:filter_comparison} shows HumanEval+ pass@1 at Round~1 and Round~5 for each
model and filter strategy.
Binary filtering achieves the highest R5 HumanEval+ for SantaCoder ($0.1037$),
StarCoder2-3B ($0.122$), and Qwen2.5-Coder ($0.122$), while Compile filtering is best
for Code Llama-7B ($0.171$).
However, higher R1 performance does not guarantee better R5 performance: for SantaCoder and
StarCoder2, the binary classifier achieves the highest R1 score but shows more variable
long-horizon behavior.

\begin{figure}[t]
  \centering
  \includegraphics[width=\linewidth]{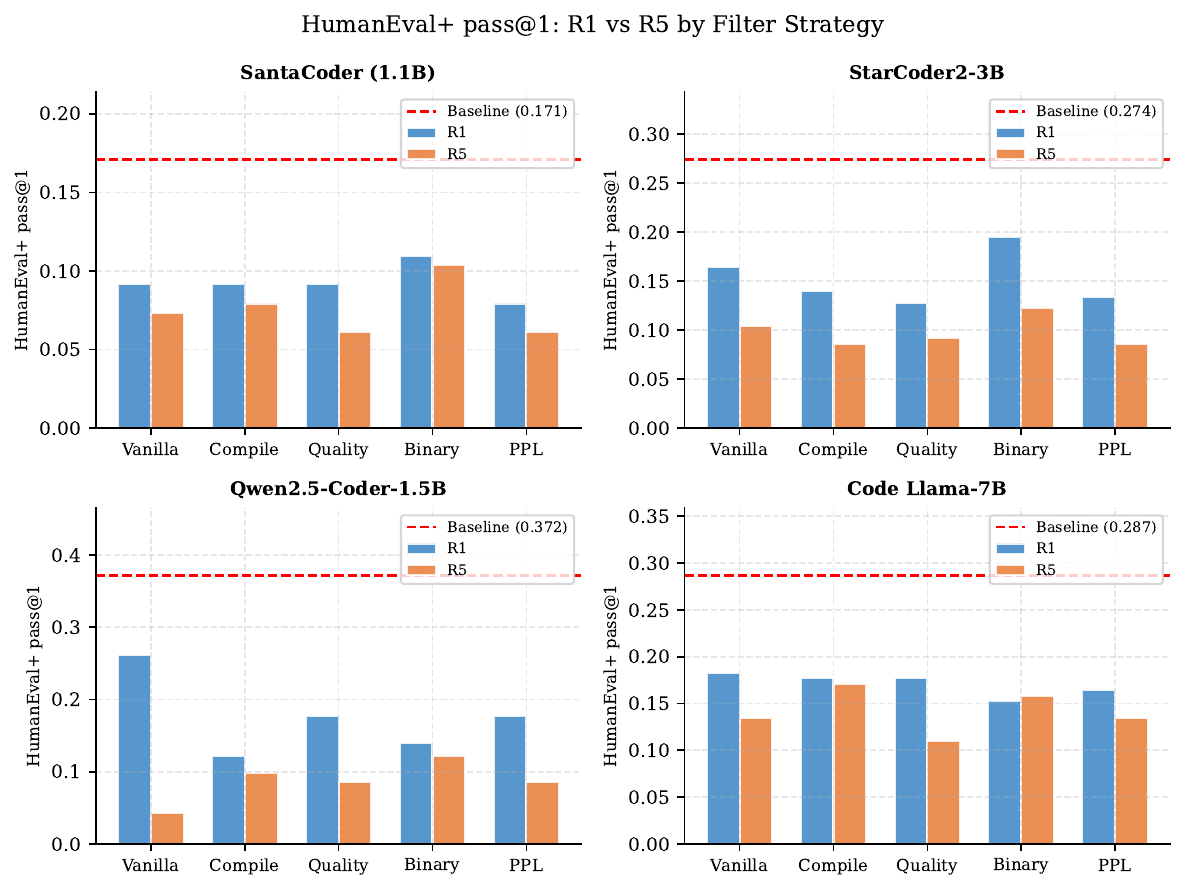}
  \caption{%
    \textbf{HumanEval+ pass@1: Round~1 vs.\ Round~5 by filter strategy.}
    Blue bars = R1; orange bars = R5. Red dashed line = pre-training baseline.
    Binary filtering gives the strongest R5 HumanEval+ score for three models;
    Compile filtering gives the strongest R5 score for Code Llama-7B.
  }
  \label{fig:filter_comparison}
\end{figure}

\paragraph{Heatmap of HumanEval+ across rounds.}
Figure~\ref{fig:heatmap} provides a compact overview of all 20 model$\times$filter combinations
across rounds 1--5.
Brighter cells indicate higher pass@1; the progressive darkening from left to right
within each model block illustrates the general collapse trend.
StarCoder2-3B with binary filtering achieves the highest round-1 score ($0.195$) among all
combinations, while Qwen2.5-Coder-1.5B with vanilla self-training collapses most severely,
dropping to $0.043$ at Round~5.

\begin{figure}[t]
  \centering
  \includegraphics[width=\linewidth]{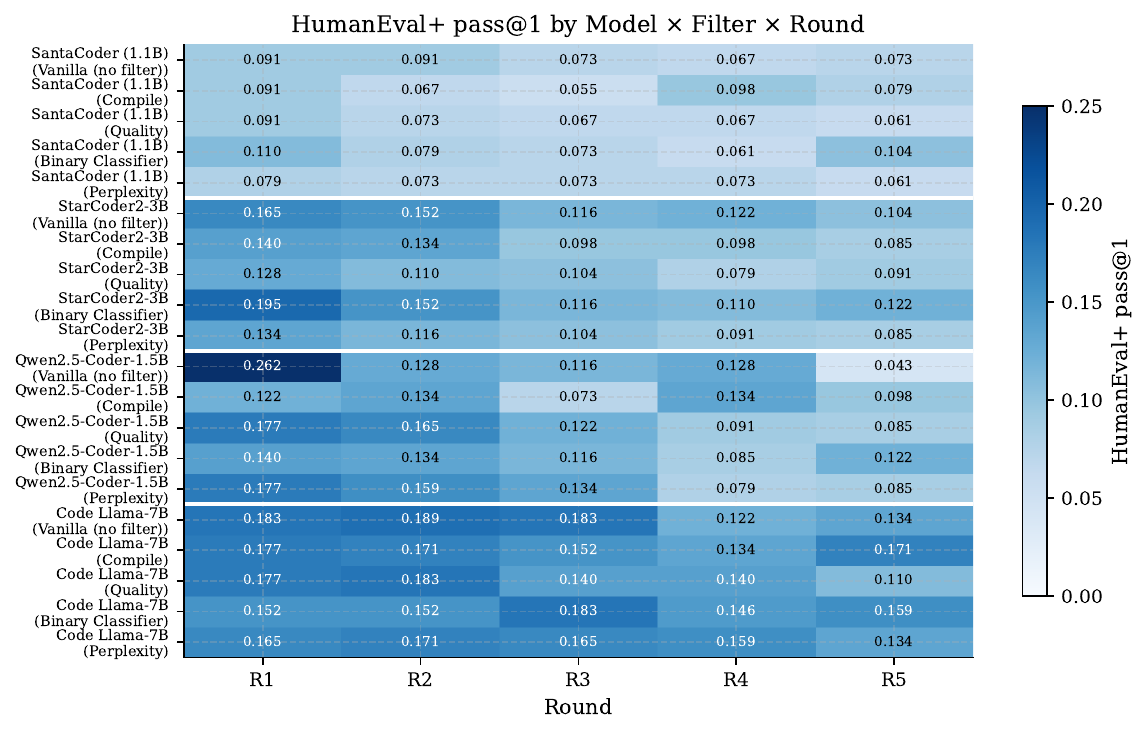}
  \caption{%
    \textbf{HumanEval+ pass@1 heatmap over recursive rounds.}
    Rows are model--filter pairs and columns are rounds 1--5.
    Brighter cells indicate higher pass@1, and annotations give exact values.
    Horizontal white lines separate model families.
  }
  \label{fig:heatmap}
\end{figure}

\subsection{SantaCoder: All Strategies Across Rounds}
\label{subsec:santacoder}

\paragraph{Main collapse result.}
All strategies decline from the pre-retraining SantaCoder baseline
(HumanEval+ $0.171$, MBPP+ $0.294$) across 5 recursive rounds.
Figure~\ref{fig:collapse_main} and Table~\ref{tab:final_comparison} summarize the results.
Vanilla collapses most severely: HumanEval+ falls from $0.092$ at R1 to $0.073$ at R5, while
MBPP+ reaches $0.008$ at R4 and remains only $0.019$ at R5 (from baseline $0.294$; $-93.5\%$ at R5).
Binary Classifier achieves the highest R5 HumanEval+ ($0.104$) but shows erratic MBPP+
behavior, dropping to $0.019$ at R4.
Quality filtering provides the most stable MBPP+ trajectory across rounds.

\begin{figure}[t]
  \centering
  \includegraphics[width=0.8\linewidth]{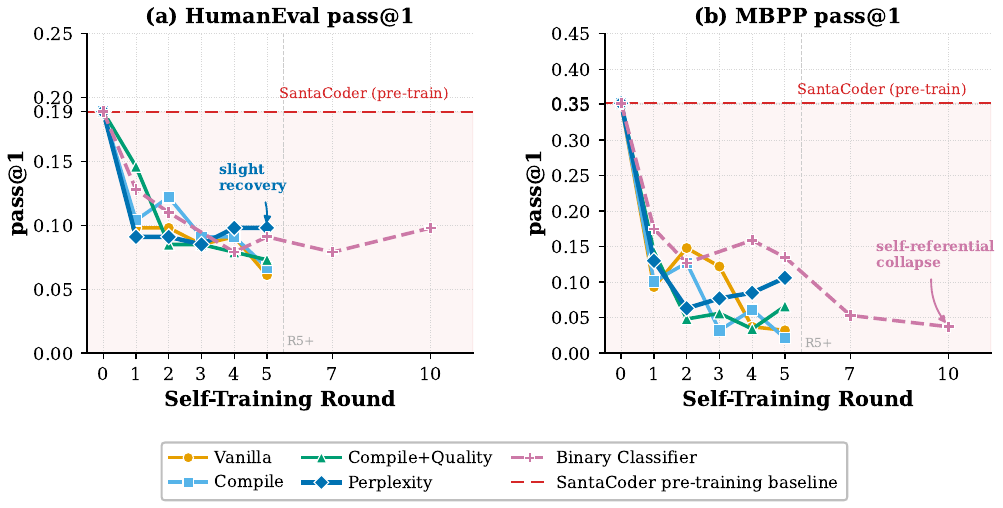}
  \caption{%
    \textbf{Recursive self-training results on SantaCoder.}
    HumanEval pass@1 (left) and MBPP pass@1 (right) across recursive fine-tuning rounds.
    Human-gate methods are Compile and Quality; AI-self-gate methods are Perplexity and Binary Classifier.
    The red dashed line denotes the SantaCoder pretraining baseline.
  }
  \label{fig:collapse_main}
   
\end{figure}

\paragraph{Per-round results for SantaCoder.}
Table~\ref{tab:santacoder_rounds} reports HumanEval+, MBPP+, LiveCodeBench, and filter pass rate
for SantaCoder across all five strategies and five rounds.
Figure~\ref{fig:strategy_comparison} complements the table by contrasting the early checkpoint with the extended SantaCoder checkpoint, making the short-horizon and long-horizon effects of each filter visible side by side.

\begin{table}[t]
\centering
\caption{SantaCoder per-round results. HE+ = HumanEval+ pass@1, MBPP+ = MBPP+ pass@1,
LCB = LiveCodeBench pass@1, FPR = filter pass rate. Baseline (R0): HE+=0.171, MBPP+=0.294, LCB=0.018.}
\label{tab:santacoder_rounds}
\begin{tabular}{llcccc}
\toprule
\textbf{Filter} & \textbf{Round} & \textbf{HE+} & \textbf{MBPP+} & \textbf{LCB} & \textbf{FPR} \\
\midrule
\multirow{5}{*}{Vanilla}  & R1 & 0.092 & 0.172 & 0.000 & 1.000 \\
                           & R2 & 0.092 & 0.109 & 0.000 & 1.000 \\
                           & R3 & 0.073 & 0.074 & 0.000 & 1.000 \\
                           & R4 & 0.067 & 0.008 & 0.000 & 1.000 \\
                           & R5 & 0.073 & 0.019 & 0.000 & 1.000 \\
\midrule
\multirow{5}{*}{Compile}  & R1 & 0.092 & 0.103 & 0.000 & 1.000 \\
                           & R2 & 0.067 & 0.053 & 0.000 & 1.000 \\
                           & R3 & 0.055 & 0.082 & 0.000 & 1.000 \\
                           & R4 & 0.098 & 0.037 & 0.000 & 1.000 \\
                           & R5 & 0.079 & 0.034 & 0.000 & 1.000 \\
\midrule
\multirow{5}{*}{Quality}  & R1 & 0.092 & 0.183 & 0.000 & 1.000 \\
                           & R2 & 0.073 & 0.058 & 0.000 & 1.000 \\
                           & R3 & 0.067 & 0.034 & 0.000 & 1.000 \\
                           & R4 & 0.067 & 0.037 & 0.000 & 1.000 \\
                           & R5 & 0.061 & 0.069 & 0.000 & 1.000 \\
\midrule
\multirow{5}{*}{PPL}      & R1 & 0.079 & 0.143 & 0.000 & 0.167 \\
                           & R2 & 0.073 & 0.101 & 0.003 & 0.201 \\
                           & R3 & 0.073 & 0.042 & 0.000 & 0.220 \\
                           & R4 & 0.073 & 0.066 & 0.000 & 0.229 \\
                           & R5 & 0.061 & 0.032 & 0.000 & 0.235 \\
\midrule
\multirow{5}{*}{Binary}   & R1 & 0.110 & 0.061 & 0.000 & 0.250 \\
                           & R2 & 0.079 & 0.061 & 0.000 & 0.250 \\
                           & R3 & 0.073 & 0.048 & 0.000 & 0.250 \\
                           & R4 & 0.061 & 0.019 & 0.000 & 0.250 \\
                           & R5 & \textbf{0.104} & 0.058 & 0.000 & 0.250 \\
\bottomrule
\end{tabular}%
\end{table}

\begin{figure}[t]
  \centering
  \includegraphics[width=0.9\linewidth]{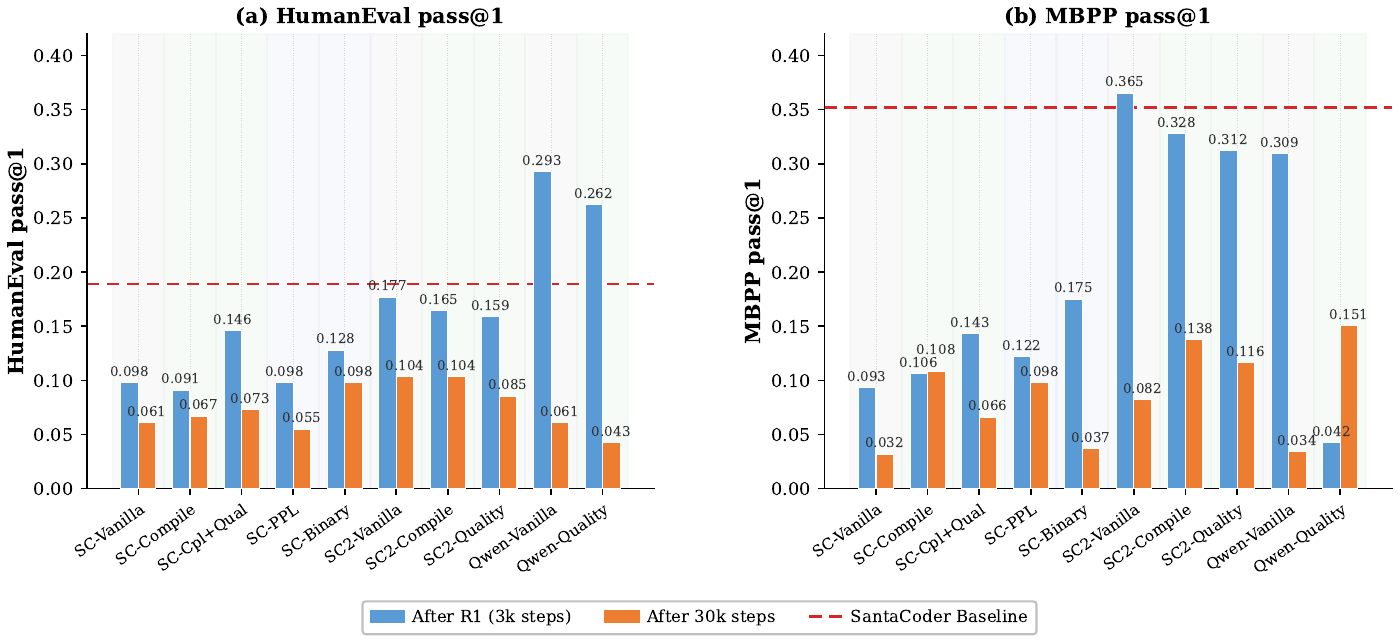}
  \caption{%
    \textbf{Filtering strategies on SantaCoder: early vs.\ extended recursive training.}
    Blue bars show pass@1 after round~1; orange bars show the extended 30k-step checkpoint where available.
    Background shading: white = no gate; green = Human gate; light blue = AI self-gate.
  }
  \label{fig:strategy_comparison}

\end{figure}

\begin{table}[t]
\centering
\caption{SantaCoder: all strategies at R1 and R5 (15k steps). Baseline: HE=0.189, MBPP=0.352.
\textbf{Bold} = best at R5; \underline{underline} = second best per column.}
\label{tab:final_comparison}
\resizebox{\columnwidth}{!}{
\begin{tabular}{llcccccc}
\toprule
\hline
\textbf{Method} & \textbf{Gate type} & \textbf{HE R1} & \textbf{MBPP R1} & \textbf{LCB R1}
  & \textbf{HE@R5} & \textbf{MBPP@R5} & \textbf{LCB@R5} \\
\midrule
Baseline (SantaCoder) & -- & 0.189 & 0.352 & 0.018 & -- & -- & -- \\
\midrule
Vanilla            & -- (no gate)  & 0.098 & 0.201 & 0.000 & 0.079 & 0.027 & 0.000 \\
Compile            & Human gate    & 0.110 & 0.122 & 0.000 & 0.092 & \underline{0.045} & 0.000 \\
Quality            & Human gate    & 0.116 & 0.212 & 0.000 & \underline{0.067} & \textbf{0.079} & 0.000 \\
Perplexity         & AI self-gate  & 0.104 & 0.146 & 0.000 & 0.092 & 0.048 & 0.000 \\
Binary Classifier  & AI self-gate  & \textbf{0.134} & \underline{0.077} & 0.000
  & \textbf{0.116} & 0.079 & 0.000 \\
  \hline
\bottomrule
\end{tabular}
}

\end{table}

\paragraph{Perplexity self-gating loses calibration.}
\label{subsec:perplexity_analysis}
Among all AI self-gate methods, perplexity filtering produces a more stable trajectory than
binary classification on MBPP+.
The filter pass rate for PPL slowly increases from $0.167$ at R1 to $0.235$ at R5
(Table~\ref{tab:santacoder_rounds}), indicating gradual degradation of the self-calibration
signal (Figure~\ref{fig:filter_pass_rate}).
This aligns with Theorem~\ref{thm:gating_degenerate}: once the generating model collapses,
its perplexity signal becomes miscalibrated.

\begin{figure}[t]
  \centering
  \includegraphics[width=0.75\linewidth]{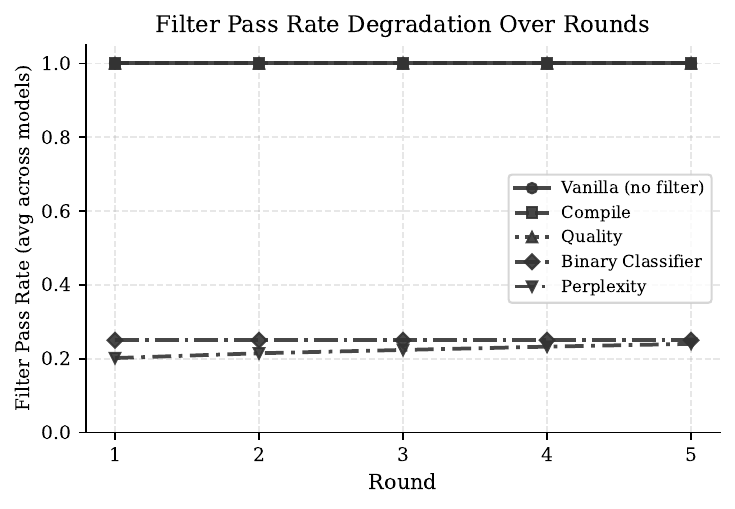}
  \caption{%
    \textbf{Filter pass rate over rounds, averaged across all four models.}
    No-gate and Human-gate regimes (Vanilla, Compile, Quality) maintain stable pass rates by construction or by fixed rules.
    AI self-gate methods (Binary, PPL) show increasing pass rates, indicating progressive
    loss of discriminative power---the rubber-stamp regime of Theorem~\ref{thm:gating_degenerate}.
  }
  \label{fig:filter_pass_rate}
   
\end{figure}

\subsection{MBPP+ Collapse Pattern}
\label{subsec:mbpp_collapse}

MBPP+ degrades more severely than HumanEval+ across all models and filter strategies.
Figure~\ref{fig:mbpp_collapse} shows MBPP+ trajectories for all models under all five filters.
The most dramatic collapse is observed for SantaCoder under Vanilla self-training,
where MBPP+ drops from the baseline $0.294$ to $0.008$ at Round~4 ($-97.3\%$).
Qwen2.5-Coder also collapses severely under Vanilla: MBPP+ from $0.582$ to $0.082$ at Round~5 ($-85.9\%$).
Code Llama-7B is notably more robust to MBPP+ collapse, with MBPP+ under Vanilla remaining
at $0.214$ at Round~5, reflecting its stronger pretraining foundation.

\begin{figure}[t]
  \centering
  \includegraphics[width=\linewidth]{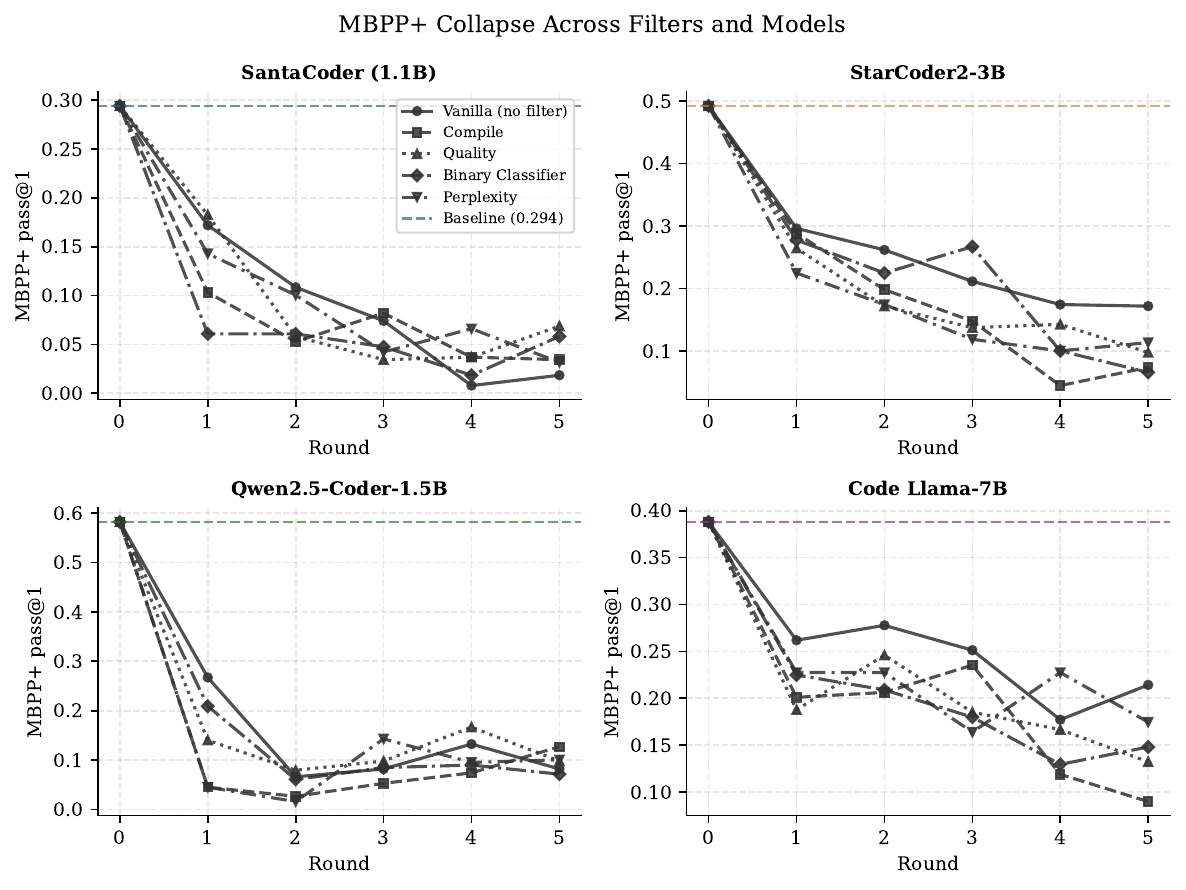}
  \caption{%
    \textbf{MBPP+ collapse across all models and filter strategies.}
    Each subplot shows one model; all five filter trajectories from Round~0 (baseline) to Round~5.
    Colored dashed horizontal line = model-specific pre-training baseline.
    MBPP+ degrades more severely than HumanEval+ for all models, particularly SantaCoder and Qwen2.5-Coder.
  }
  \label{fig:mbpp_collapse}
\end{figure}

\subsection{LiveCodeBench: Near-Zero Collapse}
\label{subsec:lcb_collapse}

Figure~\ref{fig:livecodebench} shows LiveCodeBench pass@1 trajectories under the Vanilla filter
for all four models.
Despite Qwen2.5-Coder starting from the highest LCB baseline ($0.238$), it collapses to $0.000$
by Round~5.
Code Llama-7B starts at $0.045$ and reaches approximately $0.003$ at R5 under Vanilla.
StarCoder2-3B and SantaCoder are already near zero throughout.
This confirms that recursive self-training disproportionately degrades harder,
out-of-distribution reasoning tasks even when introductory benchmarks show some residual performance.

\begin{figure}[t]
  \centering
  \includegraphics[width=0.75\linewidth]{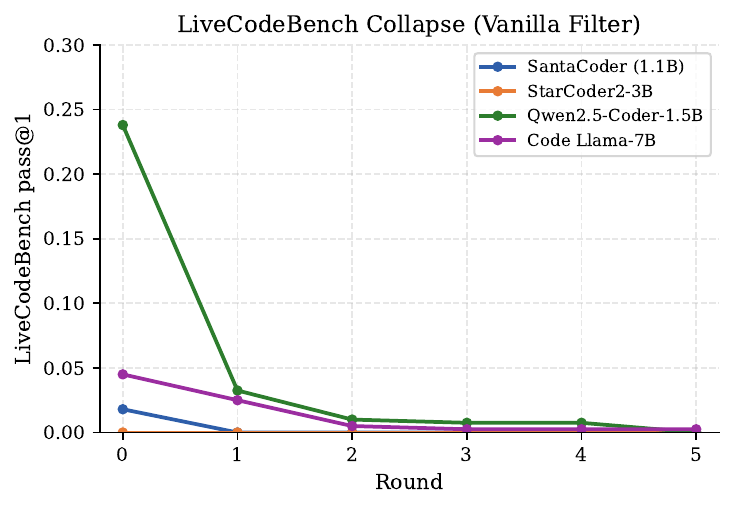}
  \caption{%
    \textbf{LiveCodeBench collapse under Vanilla self-training.}
    All four models approach near-zero pass@1 by Round~5. Qwen2.5-Coder collapses from
    $0.238$ to $0.000$; Code Llama-7B collapses from $0.045$ to $0.003$.
    Y-axis: $[0, 0.30]$.
  }
  \label{fig:livecodebench}
\end{figure}

\subsection{Cross-Model Validation}
\label{subsec:cross_model}

We evaluate recursive self-retraining across all four code LLMs.
Table~\ref{tab:cross_model_summary} and Figure~\ref{fig:cross_model_degrade} confirm that
vanilla recursive self-training collapse is a cross-model phenomenon.

\begin{figure}[t]
  \centering
  \includegraphics[width=\linewidth]{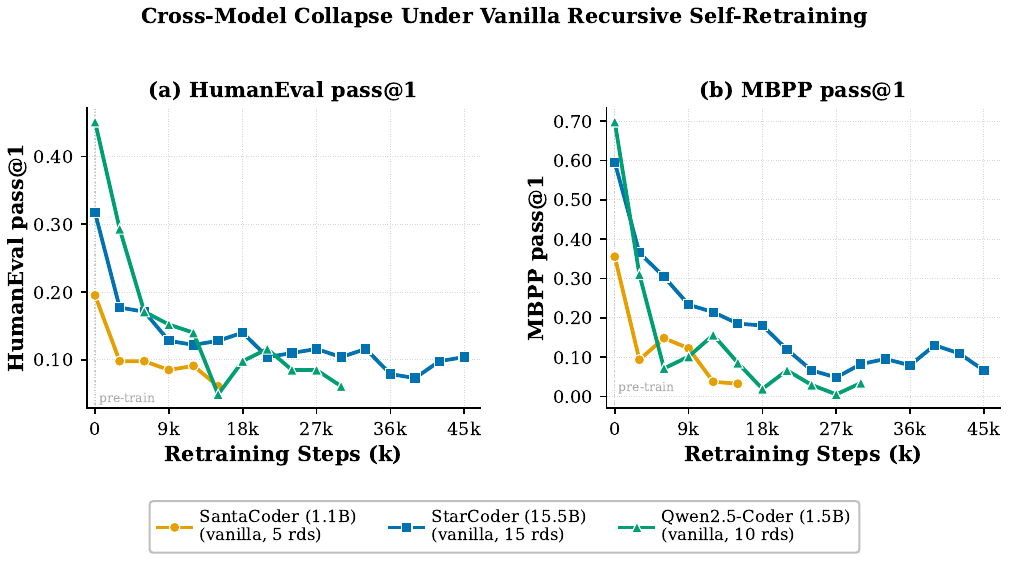}
  \caption{%
    \textbf{Cross-model HumanEval (a) and MBPP (b) collapse under vanilla recursive self-retraining.}
    All model families degrade monotonically from the pre-retraining baseline (step 0).
    Collapse is a systematic property of self-consuming retraining, independent of architecture.
  }
  \label{fig:cross_model_degrade}
\end{figure}

\begin{table}[t]
\centering
\caption{Cross-model recursive self-retraining: baseline (R0) and Round-5 summaries.
For each model, the table reports vanilla recursion, the strongest Human gate on HumanEval+ at R5, and the strongest AI self-gate on HumanEval+ at R5.
HG = Human gate; AI = AI self-gate. \textbf{Bold} = best R5 value within the reported strategies for each model and metric.}
\label{tab:cross_model_summary}
\resizebox{\columnwidth}{!}{
\begin{tabular}{llccccc}
\toprule
\hline
\textbf{Model} & \textbf{Stage} & \textbf{HumanEval} & \textbf{HumanEval+}
  & \textbf{MBPP} & \textbf{MBPP+} & \textbf{LCB} \\
\midrule
SantaCoder (1.1B)            & Baseline           & 0.189 & 0.171 & 0.352 & 0.294 & 0.018 \\
StarCoder2-3B                & Baseline           & 0.317 & 0.274 & 0.595 & 0.492 & 0.000 \\
Qwen2.5-Coder-1.5B           & Baseline           & 0.451 & 0.372 & 0.698 & 0.582 & 0.238 \\
Code Llama-7B                & Baseline           & 0.328 & 0.287 & 0.441 & 0.388 & 0.045 \\
\midrule
SantaCoder (1.1B)            & R5 Vanilla         & 0.079 & 0.073 & 0.027 & 0.019 & 0.000 \\
SantaCoder (1.1B)            & R5 Compile (HG)    & 0.092 & 0.079 & 0.045 & 0.034 & 0.000 \\
SantaCoder (1.1B)            & \textbf{R5 Binary (AI best)} & \textbf{0.116} & \textbf{0.104} & 0.079 & 0.058 & 0.000 \\
\midrule
StarCoder2-3B                & R5 Vanilla         & 0.128 & 0.104 & 0.185 & 0.172 & 0.000 \\
StarCoder2-3B                & R5 Quality (HG)    & 0.104 & 0.092 & 0.106 & 0.098 & 0.003 \\
StarCoder2-3B                & \textbf{R5 Binary (AI best)} & \textbf{0.140} & \textbf{0.122} & 0.090 & 0.066 & 0.000 \\
\midrule
Qwen2.5-Coder-1.5B           & R5 Vanilla         & 0.049 & 0.043 & 0.085 & 0.082 & 0.000 \\
Qwen2.5-Coder-1.5B           & R5 Compile (HG)    & 0.104 & 0.098 & 0.148 & 0.127 & 0.003 \\
Qwen2.5-Coder-1.5B           & \textbf{R5 Binary (AI best)} & \textbf{0.146} & \textbf{0.122} & 0.082 & 0.071 & 0.000 \\
\midrule
Code Llama-7B                & R5 Vanilla         & 0.177 & 0.134 & 0.235 & 0.214 & 0.003 \\
Code Llama-7B                & \textbf{R5 Compile (HG best)} & \textbf{0.189} & \textbf{0.171} & 0.124 & 0.090 & 0.003 \\
Code Llama-7B                & R5 PPL (AI)        & 0.165 & 0.134 & 0.206 & 0.175 & 0.000 \\
\hline
\bottomrule
\end{tabular}
}
 
\end{table}

Table~\ref{tab:cross_model_summary} shows that recursive self-retraining causes consistent
degradation across all model families.
Vanilla collapses sharply: Qwen2.5-Coder drops from HumanEval+ $0.372$ to $0.043$ ($-88.4\%$)
and from MBPP+ $0.582$ to $0.082$ ($-85.9\%$) by Round~5.
Binary filtering is the best AI self-gate at R5 for SantaCoder (HumanEval+ $0.104$),
StarCoder2-3B (HumanEval+ $0.122$), and Qwen2.5-Coder (HumanEval+ $0.122$).
Code Llama-7B is an exception: Compile filtering is the best strategy at R5
(HumanEval+ $0.171$), showing that Human gate benefits depend on model and data characteristics.

\subsection{Cross-Model Analysis}
\label{subsec:cross_model_analysis}

\begin{figure}[t]
  \centering
  \includegraphics[width=\linewidth]{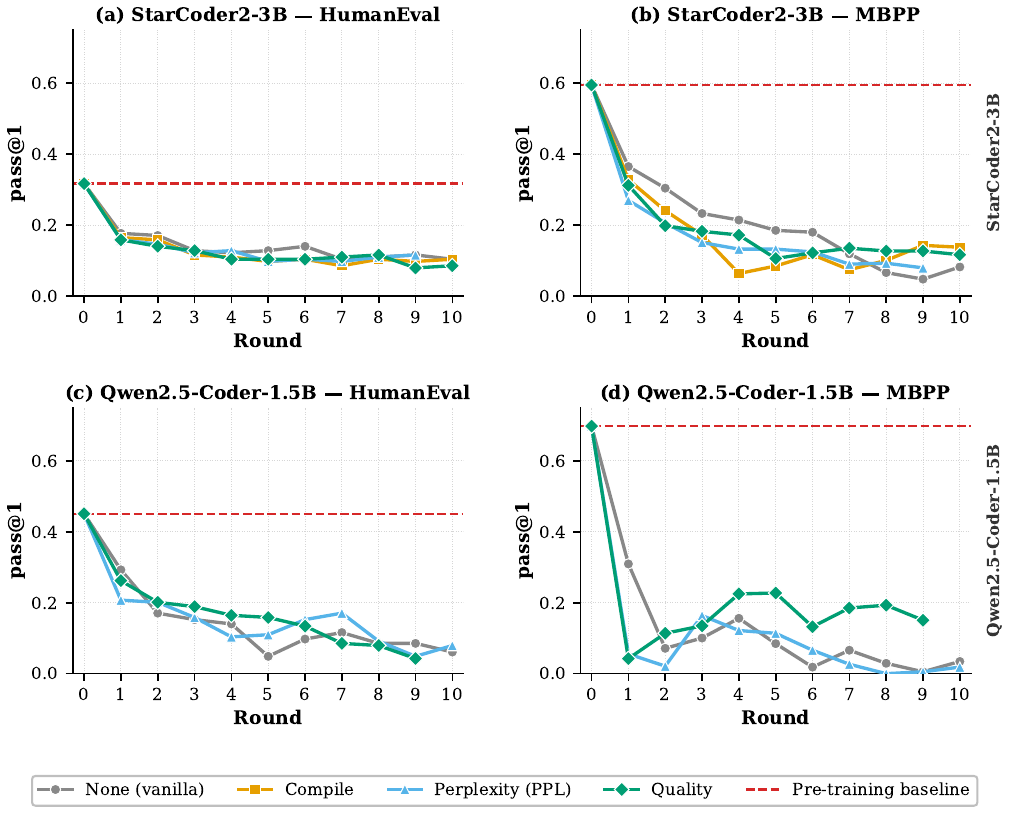}
  \caption{
    \textbf{Per-strategy collapse trajectories across models.}
    HumanEval and MBPP \texttt{pass@1} over recursive self-training rounds for StarCoder2-3B (top row)
    and Qwen2.5-Coder-1.5B (bottom row) under all filtering strategies.
    Dashed red line indicates each model's pre-training baseline.
    All strategies converge well below the pre-training baseline by round~5.
  }
  \label{fig:strategy_multimodel}
\end{figure}

Figure~\ref{fig:strategy_multimodel} shows that both StarCoder2-3B and Qwen2.5-Coder-1.5B exhibit recursive self-training collapse regardless
of filtering strategy, providing direct empirical corroboration that the collapse mechanism is not specific to SantaCoder.

\paragraph{Collapse magnitude.}
Under vanilla self-retraining (no filter), StarCoder2-3B declines from a HumanEval+ baseline of
$0.274$ to $0.104$ by round~5 ($-62.0\%$) and from an MBPP+ baseline of $0.492$ to $0.172$
($-65.0\%$).
Qwen2.5-Coder-1.5B collapses even more severely: HumanEval+ from $0.372$ to $0.043$
($-88.4\%$) and MBPP+ from $0.582$ to $0.082$ ($-85.9\%$) by Round~5.
Code Llama-7B shows the smallest collapse on HumanEval+ under Vanilla: from $0.287$ to $0.134$
($-53.3\%$), and on MBPP+ from $0.388$ to $0.214$ ($-44.8\%$).
The steeper decline in Qwen2.5-Coder is consistent with the stronger initial distribution
hypothesis: a model pretrained on a richer code corpus exhibits a sharper distributional
shift when contaminated with self-generated outputs.

\paragraph{Effect of filtering strategies.}
The benefit of a gate depends on which failure mode the benchmark rewards.
Compile filtering improves Qwen2.5-Coder at R5 relative to Vanilla on both HumanEval+ ($0.098$ vs.\ $0.043$) and MBPP+ ($0.127$ vs.\ $0.082$), and it is the strongest Code Llama-7B strategy on HumanEval+ ($0.171$).
For StarCoder2-3B, however, compile and quality filters lag behind Vanilla at R5 on HumanEval+ and MBPP+, showing that syntactic validity alone cannot guarantee semantic preservation.
Binary filtering achieves the highest R5 HumanEval+ for three of four models, which means self-scoring can still rank some useful samples in early and medium horizons.
Its MBPP+ behavior is weaker and less stable, matching our central story: AI self-gates may improve one visible metric while losing broader semantic coverage.
Perplexity (PPL) filtering shows a slow increase in pass rate from approximately $0.17$ to $0.24$
(averaged across models; Figure~\ref{fig:filter_pass_rate}), consistent with the endogenous degeneracy
result of Theorem~\ref{thm:gating_degenerate}.

\paragraph{Collapse attractor.}
Despite large differences in architecture size, pretraining data, and starting pass@1,
all four models show qualitatively similar long-horizon degradation patterns.
SantaCoder ($0.171$ HumanEval+ baseline) and Qwen2.5-Coder ($0.372$ baseline) both reach
approximately $0.04$--$0.10$ HumanEval+ pass@1 under Vanilla by Round~5.
StarCoder2-3B and Code Llama-7B maintain somewhat higher residual performance ($0.10$--$0.14$),
likely due to stronger capacity and pretraining data quality.
This convergence to a common degradation floor is consistent with the theoretical prediction
of a collapse attractor.

\paragraph{Key takeaway.}
No filtering strategy evaluated here---Compile, Perplexity, Quality, or Binary---prevents
collapse across all model families.
Binary filtering achieves the best R5 HumanEval+ for three models, but this comes with
erratic MBPP+ behavior.
Human gates are more interpretable and preserve executable validity, yet simple compile/static filters are still too weak to guarantee semantic robustness.
LiveCodeBench approaches near-zero for all models by Round~5 regardless of strategy,
confirming that recursive self-training systematically degrades harder, out-of-distribution tasks.


\section{Conclusion}
\label{sec:conclusion}
 
We studied recursive self-training collapse in code LLMs under no review, Human-gate review, and AI-self-gate review. Across models and benchmarks, no review produces the sharpest degradation, Human gates preserve useful validity signals but cannot stop long-horizon semantic drift, and AI self-gates can look strong on early HumanEval-style metrics while their acceptance signal drifts with the generator. Our theory explains this through gated distributional reweighting: exogenous gates preserve an external quality signal, while endogenous gates can become self-confirming. Stable recursive code LLM training therefore needs verification that remains outside the model's own preference distribution.

\section{Limitations}
Our experiments focus on Python benchmarks and models up to 7B parameters in the full five-strategy sweep, with additional StarCoder trajectories in the appendix. Human-gate filters are simplified proxies for full PR review and do not fully check semantics, security, or design quality. Longer horizons, larger models, multilingual code, stronger execution-based gates, and periodic human-verified data remain important future directions.

\bibliography{reference}
\bibliographystyle{plainnat}

\newpage

\appendix

\section{Related Work}
\label{app:related}

\subsection{Model Collapse}

The systematic study of model collapse begins with Shumailov et al.~\cite{shumailov2024curserecursiontraininggenerated}, who demonstrate that training on recursively generated data causes (1) early collapse: distributional errors accumulate and the model drifts from the true distribution; and (2) late collapse: low-frequency events permanently disappear.
Their experiments on VAE, GMM, and OPT confirm that fresh human data must be periodically injected to prevent collapse.
Alemohammad et al.~\cite{alemohammad2023selfconsuminggenerativemodelsmad} study self-consuming image generation models, proving that pure synthetic training loops monotonically shrink distribution variance (MAD: Model Amplification Disorder).
They identify three loop types: fully synthetic, augmented (partial human mixing), and accumulating (fixed replay); only augmented and accumulating loops with sufficient human fraction avoid collapse.

Briesch et al.~\cite{briesch2023llmselfsuffering} provide the first LLM-specific self-consuming study using logically verifiable outputs (logical expressions), finding that synthetic data fraction is the primary collapse predictor.
Seddik et al.~\cite{seddik2024howbadsynthetic} apply random matrix theory: token variance decays exponentially per generation, and mixing $\geq 5\%$ real data prevents long-term collapse; $100\%$ synthetic data collapses linearly with generation count.
Dohmatob et al.~\cite{dohmatob2024taleoftails} reframe collapse as a change in scaling laws---synthetic training truncates the power-law tail of the data distribution, eliminating rare events.
Suresh et al.~\cite{suresh2024rateofcollapse} prove that preserving a fraction $\epsilon > 0$ of real data per generation keeps the KL divergence from the true distribution bounded, while $\epsilon \to 0$ (pure synthetic) leads to unbounded divergence---formalizing the role of exogenous quality signal.

On the constructive side, Ferbach et al.~\cite{ferbach2024curateddatapreferences} prove a characterization theorem: self-consuming loops converge to the preference-maximizing distribution if and only if the curation function aligns with a ground-truth reward.
This directly justifies our Human gate framework: a compile gate or execution gate aligns with ground-truth correctness; AI self-gate does not.
Gillman et al.~\cite{gillman2024selfcorrecting} propose a correction function that maps synthetic samples toward higher-probability regions of the true distribution, stabilizing loops; code execution is the natural correction function in the code domain.
Zhu et al.~\cite{zhu2024howtosynthesize} demonstrate token-level diversity perturbations as a mitigation strategy.
Fu et al.~\cite{fu2025preventcollapse} introduce recursive stability conditions for Transformer in-context learning.

\subsection{Modern Code Review and Exogenous Verification}

Modern code review provides the software-engineering analogue of the Human gate in our framework.
Bacchelli and Bird~\cite{bacchelli2013expectations} show that defect finding is only one part of review: reviewers also enforce design norms, share knowledge, and improve maintainability.
Bosu et al.~\cite{bosu2015characteristics} study which review comments developers find useful, while McIntosh et al.~\cite{mcintosh2016impact} link code-review coverage, participation, and reviewer expertise to post-release software quality.
At industrial scale, Sadowski et al.~\cite{sadowski2018modern} describe Google's tool-based review process over millions of changes.
These studies motivate our abstraction of review as an exogenous acceptance rule: the important property is not that every decision is manual, but that acceptance is not produced by the same model distribution being retrained.

\subsection{Code LLM Development}

\textbf{Representation era (2020--2021).}
CodeBERT~\cite{feng2020codebert} established the bimodal NL-PL representation paradigm (encoder-only, 6 languages, MLM + RTD).
Codex~\cite{human_eval} introduced generative code LLMs (12B, fine-tuned GPT on GitHub) and the pass@$k$ evaluation standard.
AlphaCode~\cite{li2022alphacode} scaled to competition-level difficulty at 41B parameters, using massive sampling ($10^3$ candidates) plus execution-based filtering---an early Human gate.

\textbf{Fill-in-the-middle and open-source (2022--2023).}
InCoder~\cite{fried2022incoder} unified generation and infilling via causal masking.
CodeGen~\cite{nijkamp2022codegen} demonstrated multi-turn multi-lingual generation at 16B.
The Stack~\cite{kocetkov2022stack} established 3TB permissive-license code as the community standard.
SantaCoder~\cite{benallal2023santacoder} showed that 1.1B models with high-quality filtering can match much larger counterparts; this is our primary base model.
StarCoder~\cite{li2023starcoder} (15.5B) set the open-source SOTA on HumanEval (67.2\%).
CodeT5+~\cite{wang2023codet5plus} demonstrated flexible encoder-decoder pretraining objectives.

\textbf{Data-quality and instruction-tuning era (2023--2024).}
WizardCoder~\cite{luo2023wizardcoder} applied Evol-Instruct (AI-generated instruction evolution) to code, reaching HumanEval 57.3\%; this is an early AI-assisted data augmentation pipeline.
Gunasekar et al.~\cite{gunasekar2023textbooks} showed that 6B tokens of GPT-4-quality synthetic data can train a 1.3B model to HumanEval 50.6\% (phi-1)---one-shot synthetic, not recursive.
Magicoder~\cite{wei2023magicoder} introduced OSS-Instruct: seed real code snippets $\to$ GPT-generated problems $\to$ 7B model at HumanEval 70.7\%.
StarCoder2~\cite{lozhkov2024starcoder2} and The Stack v2 extended to 900+ languages with improved governance.
CodeRL~\cite{le2022coderl} uses unit-test feedback and a learned critic for program synthesis, showing the value of grounding code generation in functional signals.
SelfCodeAlign~\cite{wei2024selfcodealign} introduced the first execution-validated self-alignment pipeline; its execution gate corresponds to our Human gate archetype.
Code Llama~\cite{roziere2023codellama} (7B, 13B, 34B) fine-tuned Llama 2 on code with a mixed code-language pretraining data mixture.

\textbf{State-of-the-art (2024--2025).}
DeepSeek-Coder~\cite{guo2024deepseekcoder} (1B--33B) achieves HumanEval 79.3\% with repository-level training.
DeepSeek-Coder-V2~\cite{zhu2024deepseekcoderv2} (MoE 236B/21B active) matches GPT-4-Turbo code ability.
Qwen2.5-Coder~\cite{hui2024qwen25coder} (0.5B--72B) surpasses GPT-4o on LiveCodeBench at 72B.
OpenCoder~\cite{huang2024opencoder} provides fully open recipe.
OpenCodeInterpreter~\cite{zheng2024opencodeinterpreter} integrates code generation with execution and refinement, further emphasizing executable feedback as an external training signal.
RLEF~\cite{gehring2024rlef} grounds code LLMs in execution feedback with RL, confirming execution-based reward is more stable than LLM-as-judge reward.
DeepSeek-R1~\cite{deepseek2025r1} demonstrates GRPO-based reasoning scaling with execution verification, reaching top-96th-percentile on Codeforces.

\section{Evaluation Benchmarks: Full Description}
\label{app:benchmarks}

\paragraph{HumanEval.}
Chen et al.~\cite{human_eval} introduce 164 hand-written Python programming problems with
unit test suites.
Each problem consists of a function signature, docstring, and 7--9 unit tests.
The evaluation metric is pass@1 (using the unbiased estimator with $n=200$ samples).
HumanEval is the primary benchmark in this paper.

\paragraph{MBPP.}
The Mostly Basic Programming Problems dataset~\cite{mbpp} contains 374 crowd-sourced Python
problems focusing on introductory-level tasks (string manipulation, arithmetic, sorting).
Each problem has 3 unit tests. We use the sanitized split.
MBPP is our second primary benchmark.

\paragraph{LiveCodeBench.}
LiveCodeBench~\cite{livecodebench} continuously collects new programming problems from LeetCode,
AtCoder, and Codeforces, preventing contamination from models that have been trained on earlier
problems.
In our experiments, SantaCoder achieves near-zero pass@1 on LiveCodeBench throughout retraining,
reflecting the gap between introductory-level proficiency and competition-level reasoning.
We retain it as a robustness probe.

\paragraph{HumanEval+ and MBPP+.}
We additionally report HumanEval+~\cite{liu2024humaneval} and MBPP+, which augment the original
test suites with additional edge cases. Results are reported in all per-round tables.

\paragraph{Code generation metrics.}
Table~\ref{tab:codegen_metrics_app} summarizes common metrics for evaluating code generation quality.
\begin{table*}[h]
\centering
\renewcommand{\arraystretch}{1.25}
\resizebox{\textwidth}{!}{%
\begin{tabular}{p{3.6cm} p{2.8cm} p{9.2cm}}
\toprule
\hline
\textbf{Metric} & \textbf{Measured on} & \textbf{Typical definition} \\
\midrule

\textbf{pass@k (functional correctness)}~\cite{human_eval} &
$k$ sampled candidates per problem &
A problem is counted as solved if \emph{at least one} of the $k$ sampled programs passes the full unit-test suite; reported as pass@1, pass@10, pass@100.~\cite{human_eval} \\
\hline
\textbf{pass@1 (single-sample correctness)}~\cite{human_eval} &
One candidate per problem &
Special case of pass@k with $k{=}1$: fraction of problems whose single generated program passes all tests.~\cite{human_eval} \\
\hline
\textbf{CodeBLEU}~\cite{ren2020codebleu} &
Generated code vs.\ reference code &
A weighted combination of (i) BLEU $n$-gram match, (ii) weighted $n$-gram match, (iii) AST match, and (iv) data-flow match.~\cite{ren2020codebleu} \\
\hline
\textbf{Text overlap}~\cite{papineni2002bleu,lin2004rouge,popovic2015chrf} &
Generated text/code vs.\ references &
Text overlap metrics: BLEU (precision-weighted $n$-gram overlap)~\cite{papineni2002bleu}, ROUGE (recall-oriented overlap family)~\cite{lin2004rouge}, and chrF (character $n$-gram F-score)~\cite{popovic2015chrf}. \\
\hline
\textbf{Diversity of samples}~\cite{paul2024benchmarks} &
Multiple candidates for the same prompt/problem &
Statistics over a set of sampled solutions, such as pairwise similarity distribution, distinct-$n$, or de-duplication rate, often reported together with pass@k to measure ``diverse and correct'' generation.~\cite{paul2024benchmarks} \\
\hline
\bottomrule
\end{tabular}%
}
\caption{\textbf{Common metrics for model-generated code.}
The main experiments use pass@1 because recursive retraining changes single-sample correctness, while the table lists complementary metrics that capture functional correctness, lexical overlap, structural similarity, and sample diversity.}
\label{tab:codegen_metrics_app}
\end{table*}

\paragraph{SantaCoder: Open code LMs with governance and infilling-capable training.}
The BigCode technical report on SantaCoder describes the development of 1.1B-parameter code LMs trained on permissively licensed data from The Stack with an emphasis on transparency and responsible release~\cite{benallal2023santacoder,kocetkov2022stack}. The report studies deployment-relevant modeling choices, notably Multi-Query Attention (MQA) for faster decoder inference~\cite{shazeer2019fast} and Fill-in-the-Middle (FIM) objectives that enable arbitrary-position code infilling~\cite{bavarian2022efficient,fried2022incoder}. Evaluation is conducted using MultiPL-E, which extends unit-test-driven code generation benchmarks to many languages~\cite{cassano2022multipl-e}.

\paragraph{StarCoder: Open-access code LMs with long context, infilling, and responsible release.}
StarCoder and StarCoderBase are 15.5B-parameter open-access code LMs trained primarily on permissively licensed data from The Stack~\cite{li2023starcoder,kocetkov2022stack}. The models combine an 8K context window, Fill-in-the-Middle (FIM) training for infilling, and Multi-Query Attention (MQA) for faster large-batch decoding~\cite{li2023starcoder,bavarian2022efficient,shazeer2019fast}.

\paragraph{StarCoder2: scaling open code LMs with The Stack v2.}
StarCoder2 provides 3B/7B/15B parameter checkpoints trained on 3.3--4.3T tokens, incorporating Fill-in-the-Middle (FIM) training and fast decoder inference via multi-query style attention~\cite{lozhkov2024starcoder2,bavarian2022efficient,shazeer2019fast}.

\section{Complete Proofs}
\label{app:proofs}

\subsection{Proof of Theorem~\ref{thm:gating_degenerate} (AI Self-Gate Degeneracy)}

\begin{proof}
Fix any $x$ such that Assumption~\ref{assump:self_confirming} holds and $\kappa(x)>0$ (this is $p_X$-a.e.\ $x$).
By Assumption~\ref{assump:self_confirming}, $r_{\theta_t}(x,c)=\kappa(x)$ for $p_{\theta_t}(\cdot\mid x)$-a.e.\ $c$.
Therefore,
\begin{equation}
\begin{aligned}
Z^A_{\theta_t,\theta_t}(x)
&=\sum_{c'\in\mathcal{C}}p_{\theta_t}(c'\mid x) r_{\theta_t}(x,c')\\
&=\sum_{c'\in\mathcal{C}}p_{\theta_t}(c'\mid x) \kappa(x)\\
&=\kappa(x)\sum_{c'\in\mathcal{C}}p_{\theta_t}(c'\mid x)\\
&=\kappa(x).
\end{aligned}
\end{equation}
For any $c$ in the $p_{\theta_t}(\cdot\mid x)$-support where $r_{\theta_t}(x,c)=\kappa(x)$,
\begin{equation}
\begin{aligned}
q^A_{\theta_t,\theta_t}(c\mid x)
&=\frac{p_{\theta_t}(c\mid x) r_{\theta_t}(x,c)}{Z^A_{\theta_t,\theta_t}(x)}\\
&=\frac{p_{\theta_t}(c\mid x) \kappa(x)}{\kappa(x)}\\
&=p_{\theta_t}(c\mid x),
\end{aligned}
\end{equation}
which proves $q^A_{\theta_t,\theta_t}(\cdot\mid x)=p_{\theta_t}(\cdot\mid x)$ as measures (for $p_X$-a.e.\ $x$).
Multiplying by $p_X(x)$ yields $m_t^{A}(x,c)=p_X(x)p_{\theta_t}(c\mid x)=m_t^{\mathrm{ungated}}(x,c)$ almost everywhere.
Substituting $m_t^{A}=m_t^{\mathrm{ungated}}$ into the respective MLE objectives shows the two argmax sets coincide.
\end{proof}

\subsection{Proof of Theorem~\ref{thm:equiv} (Uniform-Approximation Equivalence)}

\begin{proof}
Fix $x$ with $Z^H_{\theta_t}(x)\ge z_0$. First note that
\begin{equation}
\begin{aligned}
\big|Z^A_{\theta_t,\phi_t}(x)-Z^H_{\theta_t}(x)\big|
&=
\left|\sum_{c'}p_{\theta_t}(c'\mid x)\big(r_{\phi_t}(x,c')-r_H(x,c')\big)\right|\\
&\le \varepsilon_t,
\end{aligned}
\end{equation}
hence $Z^A_{\theta_t,\phi_t}(x)\ge Z^H_{\theta_t}(x)-\varepsilon_t\ge z_0-\varepsilon_t>0$.
Now write
\begin{equation}
q^A_{\theta_t,\phi_t}(c\mid x)-q^H_{\theta_t}(c\mid x)
=
p_{\theta_t}(c\mid x)\left(
\frac{r_{\phi_t}(x,c)}{Z^A_{\theta_t,\phi_t}(x)}
-
\frac{r_H(x,c)}{Z^H_{\theta_t}(x)}
\right),
\end{equation}
so by adding and subtracting $r_H(x,c)/Z^A_{\theta_t,\phi_t}(x)$ and using triangle inequality,
\begin{equation}
\begin{aligned}
\big\|q^A_{\theta_t,\phi_t}(\cdot\mid x)-q^H_{\theta_t}(\cdot\mid x)\big\|_1
&\le
\sum_c p_{\theta_t}(c\mid x)\frac{|r_{\phi_t}(x,c)-r_H(x,c)|}{Z^A_{\theta_t,\phi_t}(x)}\\
&+
\sum_c p_{\theta_t}(c\mid x) r_H(x,c)\left|\frac{1}{Z^A_{\theta_t,\phi_t}(x)}-\frac{1}{Z^H_{\theta_t}(x)}\right|\\
&\le
\frac{\varepsilon_t}{Z^A_{\theta_t,\phi_t}(x)}
+
Z^H_{\theta_t}(x)\cdot
\frac{|Z^A_{\theta_t,\phi_t}(x)-Z^H_{\theta_t}(x)|}{Z^A_{\theta_t,\phi_t}(x) Z^H_{\theta_t}(x)}\\
&\le
\frac{\varepsilon_t}{z_0-\varepsilon_t}
+
\frac{\varepsilon_t}{z_0-\varepsilon_t}\\
\ &\le\
\frac{2\varepsilon_t}{z_0},
\end{aligned}
\end{equation}
where the last step uses $\varepsilon_t<z_0$ (so $z_0-\varepsilon_t\ge z_0/2$ implies a slightly sharper constant; the displayed bound is a clean sufficient form).
Integrating over $x$ w.r.t.\ $p_X$ yields \eqref{eq:m_bound}.
\end{proof}

\subsection{Proof of Proposition~\ref{prop:spectral} (Spectral Concentration)}

\begin{proof}[Proof of Proposition~\ref{prop:spectral}]
Assume the noiseless recursion $\Sigma_t=A^t\Sigma_0(A^\top)^t$ and let $A=U \mathrm{diag}(s_1,\ldots,s_d)V^\top$ with $s_1>s_2\ge\cdots\ge s_d\ge 0$.
Write $\widetilde{\Sigma}_0:=V^\top\Sigma_0V\succeq 0$, so
 \begin{equation}
\Sigma_t
=U \mathrm{diag}(s_1^t,\ldots,s_d^t) \widetilde{\Sigma}_0 
\mathrm{diag}(s_1^t,\ldots,s_d^t) U^\top.
 \end{equation}
For any unit vector $y\in\mathbb{R}^d$, by the Rayleigh quotient and the change of variables $y=Ux$ (so $\|x\|=1$),
 \begin{equation}
y^\top\Sigma_t y
=x^\top \Big(\mathrm{diag}(s_1^t,\ldots,s_d^t) \widetilde{\Sigma}_0 
\mathrm{diag}(s_1^t,\ldots,s_d^t)\Big)x.
 \end{equation}
Let $D_t:=\mathrm{diag}(s_1^t,\ldots,s_d^t)=s_1^t \mathrm{diag} \big(1,\rho_2^t,\ldots,\rho_d^t\big)$ with $\rho_i:=s_i/s_1\in[0,1)$ for $i\ge 2$.
Then
\begin{equation}
\begin{aligned}
\Sigma_t
&=s_1^{2t} U\Big(\widetilde{\Sigma}_0+E_t\Big)U^\top,
\\
E_t&:=\Big(\mathrm{diag}(1,\rho_2^t,\ldots,\rho_d^t) \widetilde{\Sigma}_0 
\mathrm{diag}(1,\rho_2^t,\ldots,\rho_d^t)-\widetilde{\Sigma}_0\Big).
\end{aligned}
\label{eq:Et_def}
\end{equation}
Since $\|\mathrm{diag}(1,\rho_2^t,\ldots,\rho_d^t)\|_2\le 1$ and $\|\widetilde{\Sigma}_0\|_2=\|\Sigma_0\|_2$, we have the operator-norm bound
\begin{equation}
\begin{aligned}
\|E_t\|_2
&\le \|\widetilde{\Sigma}_0\|_2\cdot
\Big\|\mathrm{diag}(1,\rho_2^t,\ldots,\rho_d^t)^2-I\Big\|_2\\
&\le \|\Sigma_0\|_2\cdot \rho_2^{2t}.
\end{aligned}
\label{eq:Et_bound}
\end{equation}
Moreover, if $\widetilde{\Sigma}_0$ has nontrivial energy on the first coordinate, i.e.,
\begin{equation}
(\widetilde{\Sigma}_0)_{11}=v_1^\top\Sigma_0 v_1 \ge \alpha>0,
\label{eq:energy_v1}
\end{equation}
then taking the Rayleigh quotient~\cite{horn2012matrix} at $x=e_1$ gives the lower bound
\begin{equation}
\lambda_1(\Sigma_t) \ge e_1^\top D_t\widetilde{\Sigma}_0D_t e_1
=s_1^{2t}(\widetilde{\Sigma}_0)_{11} \ge \alpha s_1^{2t}.
\label{eq:lambda1_lb}
\end{equation}
For the second eigenvalue, Weyl's inequality~\cite{bhatia1997matrix} applied to \eqref{eq:Et_def} yields
\begin{equation}
\begin{aligned}
\lambda_2(\Sigma_t)
&=s_1^{2t}\lambda_2(\widetilde{\Sigma}_0+E_t)\\
&\le s_1^{2t}\big(\lambda_2(\widetilde{\Sigma}_0)+\|E_t\|_2\big)\\
&\le s_1^{2t}\Big(\lambda_2(\widetilde{\Sigma}_0)+\|\Sigma_0\|_2\rho_2^{2t}\Big).
\end{aligned}
\label{eq:lambda2_ub}
\end{equation}
Combining \eqref{eq:lambda1_lb}--\eqref{eq:lambda2_ub} gives the explicit ratio bound
\begin{equation}
\frac{\lambda_1(\Sigma_t)}{\lambda_2(\Sigma_t)}
 \ge 
\frac{\alpha}{\lambda_2(\widetilde{\Sigma}_0)+\|\Sigma_0\|_2\rho_2^{2t}}.
\label{eq:ratio_bound}
\end{equation}
In particular, when $s_1\gg s_2$ (so $\rho_2=s_2/s_1\ll 1$), the term $\rho_2^{2t}$ decays exponentially, and the top-eigen direction becomes increasingly dominated by the $v_1$-aligned energy injected by \eqref{eq:energy_v1} while contributions from other singular directions are exponentially suppressed. This yields spectral concentration in the sense that $\Sigma_t$ approaches a scaled rank-one form along $u_1$ when $\lambda_2(\widetilde{\Sigma}_0)$ is small, and more generally the leading variance component grows at rate $s_1^{2t}$ whereas any component tied to $s_2$ is upper bounded by a factor $s_2^{2t}$.
\end{proof}

\section{Quality Gate Set: Full Definition}
\label{app:quality_feasible}

\subsection{Quality Dimensions for Code Solutions}
Let $q$ denote a task/query and let $\mathcal{D}_q$ be the induced input distribution (test-time instances) for $q$.
For a candidate program $x\in\mathcal{X}$, let $R(x,i)$ denote the execution output on input $i\sim\mathcal{D}_q$, and let $y(q,i)$ denote the task-specific semantic ground truth. 
We define the following measurable quality dimensions, each of which can be used either as an evaluation metric or as a constraint in a quality gate set.

\paragraph{(A) Semantic correctness.}
We quantify semantic success by the probability of producing the correct semantic output under $\mathcal{D}_q$:
\begin{equation}
m_{\mathrm{corr}}(x;q)
 := 
\Pr_{i\sim\mathcal{D}_q} \bigl[R(x,i)=y(q,i)\bigr].
\end{equation}
When only a reference answer is available (instead of $y(q,i)$), the event $R(x,i)=y(q,i)$ can be replaced by ``execution matches the reference after normalization.''

\paragraph{(B) Runtime safety (no crashes/undefined behavior).}
Let $\mathsf{Crash}(x,i)$ be an indicator of runtime failure on input $i$ (e.g., uncaught exceptions, out-of-bounds access, non-termination detected by a timeout, or undefined behavior in low-level languages). We define
\begin{equation}
m_{\mathrm{safe}}(x;q)
 := 
1-\Pr_{i\sim\mathcal{D}_q} \bigl[\mathsf{Crash}(x,i)\bigr].
\end{equation}
Equivalently, one may impose a hard safety constraint:
\begin{equation}
\Pr_{i\sim\mathcal{D}_q} \bigl[\mathsf{Crash}(x,i)\bigr]\le \delta_{\mathrm{safe}},
\end{equation}
for a user-specified tolerance $\delta_{\mathrm{safe}}\in(0,1)$.

\paragraph{(C) Robustness to perturbations and distribution shift.}
Let $\mathcal{T}(i)$ denote a set of admissible perturbations of an input $i$ (e.g., format variations, paraphrases, boundary-value modifications). We define robust correctness as
\begin{equation}
m_{\mathrm{rob}}(x;q)
 := 
\Pr_{i\sim\mathcal{D}_q} \Bigl[\forall i'\in\mathcal{T}(i),\ R(x,i')=y(q,i')\Bigr].
\end{equation}
Alternatively, using a loss $\ell(\cdot,\cdot)$, we define a worst-case risk:
\begin{equation}
\mathcal{R}_{\mathrm{rob}}(x;q)
 := 
\mathbb{E}_{i\sim\mathcal{D}_q} \Bigl[\sup_{i'\in\mathcal{T}(i)} \ell\bigl(R(x,i'),y(q,i')\bigr)\Bigr].
\end{equation}

\paragraph{(D) Resource usage and efficiency.}
Let $T(x,i)$ be the runtime cost (e.g., wall-clock time) and optionally $M(x,i)$ the memory footprint or $C(x,i)$ the number of external API calls. A simple efficiency score is the negative expected runtime:
\begin{equation}
m_{\mathrm{res}}(x;q)
 := 
-\mathbb{E}_{i\sim\mathcal{D}_q} \bigl[T(x,i)\bigr].
\end{equation}
One may also enforce probabilistic latency constraints:
\begin{equation}
\Pr_{i\sim\mathcal{D}_q} \bigl[T(x,i)\le \tau\bigr]\ge 1-\delta_{\mathrm{time}},
\end{equation}
or tail-latency bounds (e.g., the $0.99$-quantile):
\begin{equation}
\mathrm{Quantile}_{0.99} \bigl(T(x,i)\bigr)\le \tau_{0.99}.
\end{equation}

\paragraph{(E) Security and compliance.}
Let $\mathsf{Bad}(x,i)$ indicate unsafe or policy-violating behavior (e.g., invoking dangerous system calls, writing to sensitive paths, unintended network access, or insecure deserialization). We define
\begin{equation}
m_{\mathrm{sec}}(x;q)
 := 
1-\Pr_{i\sim\mathcal{D}_q} \bigl[\mathsf{Bad}(x,i)\bigr],
\end{equation}
or impose a hard constraint $\Pr[\mathsf{Bad}]\le \delta_{\mathrm{sec}}$ for $\delta_{\mathrm{sec}}\in(0,1)$.

\paragraph{(F) Maintainability and testability.}
Maintainability can be captured by static code metrics such as cyclomatic complexity, code duplication, and dependency burden:
\begin{equation}
m_{\mathrm{maint}}(x)
 := 
-\Bigl(\lambda_1 \mathrm{Cyclomatic}(x)+\lambda_2 \mathrm{Dup}(x)+\lambda_3 \mathrm{Deps}(x)\Bigr),
\end{equation}
with weights $\lambda_1,\lambda_2,\lambda_3\ge 0$.
If a unit-test distribution $\mathcal{T}_{\mathrm{unit}}$ is available, test pass rate provides an executable proxy:
\begin{equation}
m_{\mathrm{test}}(x)
 := 
\Pr_{t\sim\mathcal{T}_{\mathrm{unit}}} \bigl[x\ \text{passes test}\ t\bigr].
\end{equation}

\paragraph{(G) Readability and style (including documentation consistency).}
Let $S_{\mathrm{read}}(x)\in[0,1]$ be a readability/style score from either rule-based or learned evaluators. We set
\begin{equation}
m_{\mathrm{read}}(x)
 := 
S_{\mathrm{read}}(x).
\end{equation}
When inline documentation $\mathrm{doc}(x)$ is present, one may additionally measure documentation--implementation consistency via an alignment score $\mathrm{Align}\bigl(x,\mathrm{doc}(x)\bigr)$, where larger values indicate better semantic agreement.

\subsection{Quality-Feasible Code Set}

Let $q$ denote a task/query, $\mathcal{X}$ the program space, and $\mathcal{D}_q$ the input distribution induced by $q$.
Let $V:\mathcal{X}\to\{0,1\}$ be a feasibility verifier, $R(x,i)$ the execution result of program $x$ on input $i$, and $y(q,i)$ the task-specific semantic ground truth.
We further define runtime event predicates $\mathsf{Crash}(x,i)$ (crash/exception/undefined behavior) and $\mathsf{Bad}(x,i)$ (unsafe or policy-violating behavior), a runtime cost $T(x,i)$ (e.g., wall-clock time), and a readability/style scorer $S_{\mathrm{read}}(x)\in[0,1]$.
For user-specified thresholds $\delta_{\mathrm{safe}},\delta_{\mathrm{sec}}\in(0,1)$ and $\tau,r_0>0$, we define the \emph{quality-feasible} solution set as
\begin{equation}
\begin{aligned}
\mathcal{F}_{\mathrm{qual}}(q)
 := 
\Bigl\{
x\in\mathcal{X}&: 
\underbrace{V(x)=1}_{\text{feasibility}},\\
\ &\underbrace{\Pr_{i\sim\mathcal{D}_q} \bigl[R(x,i)=y(q,i)\bigr]\ge \delta_{\mathrm{corr}}}_{\text{semantic correctness}},\\
\ &\underbrace{\Pr_{i\sim\mathcal{D}_q} \bigl[\mathsf{Crash}(x,i)\bigr]\le \delta_{\mathrm{safe}}}_{\text{runtime safety}},\\
\ &\underbrace{\mathbb{E}_{i\sim\mathcal{D}_q} \bigl[T(x,i)\bigr]\le \tau}_{\text{resource/efficiency}},\\
\ &\underbrace{\Pr_{i\sim\mathcal{D}_q} \bigl[\mathsf{Bad}(x,i)\bigr]\le \delta_{\mathrm{sec}}}_{\text{security/compliance}},\\
\ &\underbrace{S_{\mathrm{read}}(x)\ge r_0}_{\text{readability/style}}
\Bigr\},
\end{aligned}
\end{equation}
where $\delta_{\mathrm{corr}}\in(0,1]$ specifies the minimum acceptable semantic success rate. If only a reference answer is available (rather than $y(q,i)$), the event $R(x,i)=y(q,i)$ can be replaced by ``execution matches the reference after normalization.''

\subsection{Supporting Assumptions}

\begin{assumption}[Non-trivial quality-feasible completeness / coverage]\label{assump:coverage}
Let $\mathcal{D}$ denote the target distribution over queries $q$, and for each $q$ let $\mathcal{D}_q$ be the induced input distribution.
Let $V:\mathcal{X}\to\{0,1\}$ be a feasibility verifier and define the quality gate set $\mathcal{F}_{\mathrm{qual}}(q)\subseteq\mathcal{X}$ as in the main text (incorporating feasibility, semantic correctness, runtime safety, efficiency, security/compliance, and readability/style).
Then for $\mathcal{D}$-almost every query $q$, there exists a non-empty subset $\mathcal{F}_{\mathrm{qual}}(q)$ and constants $\alpha,\beta\in(0,1]$ such that
\begin{equation}
\mu \bigl(\mathcal{F}_{\mathrm{qual}}(q)\bigr)\ge \alpha,
\qquad
P_{\theta_0} \bigl(\mathcal{F}_{\mathrm{qual}}(q)\mid q\bigr)\ge \beta,
\end{equation}
where $\mu$ is a reference measure on $\mathcal{X}$ (or on the relevant solution equivalence class for $q$), and $P_{\theta_0}(\cdot\mid q)$ is the initial generator's conditional distribution.
\end{assumption}

\begin{assumption}[Quality-feasible margin]\label{assump:margin}
Let $d:\mathcal{X}\times\mathcal{X}\to\mathbb{R}_{\ge 0}$ be a metric (or pseudometric) on $\mathcal{X}$, and define
\begin{equation}
d(x,\mathcal{F}_{\mathrm{qual}}(q)):=\inf_{y\in\mathcal{F}_{\mathrm{qual}}(q)} d(x,y).
\end{equation}
There exists $\gamma>0$ such that for $\mathcal{D}$-almost every $q$, for all $x\in\mathcal{F}_{\mathrm{qual}}(q)$ and all $x'\in\mathcal{X}$,
\begin{equation}
d(x,x')\le \gamma
 \Longrightarrow 
x'\in\mathcal{F}_{\mathrm{qual}}(q).
\end{equation}
Equivalently, the $\gamma$-ball around any $x\in\mathcal{F}_{\mathrm{qual}}(q)$ is contained in $\mathcal{F}_{\mathrm{qual}}(q)$.
\end{assumption}

\begin{assumption}[Local contraction toward the quality gate set]\label{assump:contraction}
Let $C:\mathcal{X}\to\mathcal{X}$ denote a correction (or calibration) operator applied to candidate programs.
There exist constants $\rho\in(0,1)$ and $\eta\ge 0$ such that for $\mathcal{D}$-almost every $q$ and all $x$ in the region of interest,
\begin{equation}
d\bigl(C(x),\mathcal{F}_{\mathrm{qual}}(q)\bigr)
\ \le\
\rho\, d\bigl(x,\mathcal{F}_{\mathrm{qual}}(q)\bigr)
\ +\
\eta.
\end{equation}
When $\eta=0$, $C$ is a strict contraction toward $\mathcal{F}_{\mathrm{qual}}(q)$; when $\eta>0$, the term $\eta$ captures an irreducible residual error (noise floor).
\end{assumption}






\section{AI Self-Verification Protocol}
\label{app:selfeval}

\subsection{Implemented AI Self-Gates}
\label{app:selfeval_implemented}

This appendix details the two AI self-gate methods used in our experiments: \textbf{Perplexity} and \textbf{Binary Classifier}. We use these methods as representative AI self-verification protocols because both derive acceptance scores from the code LLM itself, rather than from human labels, execution tests, or external static analyzers. Thus, the same model family that generates candidate code also evaluates which samples enter future training.

\paragraph{Basic formulation.}
Let $q$ denote the programming problem and let $x$ denote a code sample generated by the current code LLM. We construct an evaluation prompt $\Pi_{\mathrm{eval}}(q,x)$ that contains the problem statement, the candidate code, and an explicit request for assessment. The model then produces an evaluation response
\begin{equation}
r \sim p_\theta(r \mid \Pi_{\mathrm{eval}}(q,x)).
\label{eq:selfeval_response}
\end{equation}
This is a self-evaluation procedure because the same parameterized model $p_\theta$ first generates $x$ and then, under a different prompt or scoring rule, evaluates $x$.

\paragraph{Binary-classifier self-gate.}
The binary self-gate forces the model into a two-label judgment. The evaluation prompt asks the model to classify a candidate as \texttt{good} or \texttt{bad}. In our implementation, the prompt appends ``\texttt{\textbackslash n\# quality: }'' to the last 2000 tokens of the generated completion. The acceptance score is
\begin{equation}
S_{\mathrm{bin}}(x\mid q)=\log P_\theta(\texttt{good}\mid q,x)-\log P_\theta(\texttt{bad}\mid q,x).
\label{eq:selfeval_binary_score}
\end{equation}
Samples are ranked by $S_{\mathrm{bin}}(x\mid q)$, and the top 25\% are retained for the next round of recursive fine-tuning. This uses only the model parameters: no human label, execution result, or external quality metric is used.

\paragraph{Perplexity self-gate.}
Code LLMs are generative models trained to predict code tokens and are commonly used for code completion or masked/fill-in-the-middle generation. We therefore use the model's own conditional likelihood as an AI self-gate. For each generated sample $x$ under problem context $q$, we compute
\begin{equation}
S_{\mathrm{ppl}}(x\mid q)=-\frac{1}{|x|}\sum_{i=1}^{|x|}\log P_\theta(x_i\mid q,x_{<i}).
\label{eq:selfeval_ppl_score}
\end{equation}
Equivalently, the perplexity is
\begin{equation}
\mathrm{PPL}_\theta(x\mid q)=\exp\big(S_{\mathrm{ppl}}(x\mid q)\big).
\label{eq:selfeval_ppl}
\end{equation}
Generated samples with lower $S_{\mathrm{ppl}}(x\mid q)$, or lower $\mathrm{PPL}_\theta(x\mid q)$, are treated as more acceptable. In our implementation, samples are ranked by perplexity and the lowest 25\% are retained for future training. This is an AI self-gate because the same code LLM family generates $x$ through completion/filling and then scores $x$ by its own token likelihood.

\subsection{Other Candidate AI Self-Verifier Protocols}
\label{app:selfeval_candidates}

The following protocols are natural extensions of AI self-verification, but are not used as main filtering methods in our experiments.

\paragraph{Chain-of-thought judge followed by forced verdict.}
A stronger AI verifier can first generate an assessment trace and then output a forced final verdict. The evaluation prompt may request the model to summarize the intended algorithm, identify possible bugs or edge-case failures, and then output a final label from a fixed set. The model generates a reasoning trace $z$ and a verdict $y$:
\begin{equation}
p_\theta(y,z\mid q,x)=p_\theta(z\mid q,x)p_\theta(y\mid q,x,z).
\label{eq:selfeval_cot_judge}
\end{equation}

\paragraph{Self-consistency over judge samples.}
Since one judge response can be noisy, the model can sample multiple evaluations:
\begin{equation}
r^{(1)},r^{(2)},\dots,r^{(M)} \sim p_\theta(\cdot\mid \Pi_{\mathrm{eval}}(q,x)).
\label{eq:selfeval_multi_samples}
\end{equation}
The final self-judgment can be obtained by majority vote,
\begin{equation}
\hat{y}_{\mathrm{vote}}(x\mid q)=\operatorname{mode}\{y^{(1)},\dots,y^{(M)}\},
\label{eq:selfeval_vote}
\end{equation}
or by averaged label probabilities:
\begin{equation}
\hat{y}_{\mathrm{avg}}(x\mid q)=\arg\max_{y\in\mathcal{Y}}\frac{1}{M}\sum_{m=1}^{M}P_\theta^{(m)}(y\mid q,x).
\label{eq:selfeval_avg}
\end{equation}

\paragraph{Constrained multi-label judge.}
For code LLMs, the evaluation prompt should be constrained. A weak prompt such as ``Is this code good?'' is underspecified. A more controlled prompt asks the model to check whether the implementation matches the specification, handles corner cases, and avoids likely bugs, then end with exactly one label from a fixed set such as \texttt{CORRECT}, \texttt{LIKELY\_CORRECT}, \texttt{UNCERTAIN}, \texttt{LIKELY\_BUGGY}, or \texttt{BUGGY}. This still remains an AI self-gate if the score is produced by the same model family that generated the code.

\paragraph{Why AI self-evaluation degrades.}
LLM evaluators exhibit significant self-preference bias~\cite{panickssery2024selfpreference}: they
assign higher scores to outputs in their own generation style, regardless of actual quality. As the model collapses and its style converges to a narrow distribution, the evaluator's
preferences converge with it, causing the gate to accept progressively worse outputs with progressively higher confidence.
This is the self-confirming regime formalized in Assumption~\ref{assump:self_confirming}
and Theorem~\ref{thm:gating_degenerate}.
For production-quality judging protocols that mitigate this bias,
see~\cite{panickssery2024selfpreference,hashemi2024llmrubric,zheng2023judging}.

\section{Mixed Training}
\label{app:training_data}

Code LLMs differ in their pretraining data mixtures, and we adapt the recursive fine-tuning format accordingly. The key distinction is whether a model was trained mainly on raw source code or on a broader mixture of code and code-related natural language.

\paragraph{SantaCoder.}
SantaCoder~\cite{benallal2023santacoder} is pretrained on Python, Java, and JavaScript from The Stack~\cite{kocetkov2022stack}. Its data consists primarily of source-code files, with natural language appearing only inside comments, docstrings, and identifiers. We therefore fine-tune SantaCoder using pure code samples in the recursive loop.

\paragraph{StarCoder and StarCoder2.}
StarCoder~\cite{li2023starcoder} and StarCoder2~\cite{lozhkov2024starcoder2} are trained on broader code corpora that include source code together with code-related natural language, such as issues, commits, pull-request text, and notebook content. For these models, recursive fine-tuning uses a mixed code--natural-language format to better match the original training distribution.

\paragraph{Qwen2.5-Coder.}
Qwen2.5-Coder~\cite{hui2024qwen25coder} follows a multi-stage code-model training pipeline, including large-scale code pretraining, code-related natural-language data, and instruction tuning. We therefore also use a mixed code--natural-language fine-tuning format for Qwen2.5-Coder rather than pure code-only retraining.

\paragraph{Training mixture used in our experiments.}
Following the model-specific formats above, we use pure-code recursive fine-tuning for SantaCoder and mixed code--natural-language fine-tuning for StarCoder, StarCoder2, and Qwen2.5-Coder. For the mixed setting, each round uses approximately 70--80\% code and 20--30\% code-related natural language, including problem statements, comments, commits, issues, notebooks, and instruction-style descriptions. After recursive data construction, the accepted samples are used for SFT-style fine-tuning, and we preserve a small human-verified or execution-verified anchor set when available. When recursive retraining is applied, Human-gate runs use model-independent checks such as compilation and quality rules at every round, while AI-self-gate runs use model-assigned scores such as PPL. This design follows prior analyses of recursive training collapse~\cite{suresh2024rateofcollapse,seddik2024howbadsynthetic} and code-model training practice~\cite{gunasekar2023textbooks}, while keeping the controlled variable fixed: within each model family, only the review policy changes.

\section{Compile Error Analysis}
\label{app:exec_pass}

Figure~\ref{fig:exec_pass} and Table~\ref{tab:exec_pass_analysis} report the compile and execution-pass analysis of generated code under vanilla recursion and the compile-based Human gate. We evaluate 500 generated samples with a 5s timeout. Adding the Human gate substantially increases the relaxed pass rate: vanilla drops from 55.4\% at R1 to 34.6\% at R5, while compile filtering maintains a much higher relaxed pass rate of 88.2\% at R1 and 89.4\% at R10. The error distribution also changes sharply. Under vanilla recursion, the ``Other error'' category grows from 42.8\% to 64.2\%, indicating increasing runtime failure. In contrast, the compile-based Human gate keeps ``Other error'' low, from 9.2\% at R1 to 7.4\% at R10. This shows that even a simple exogenous compile gate significantly improves executable validity and prevents runtime-error proliferation.

\begin{figure}[h]
  \centering
  \includegraphics[width=0.7\linewidth]{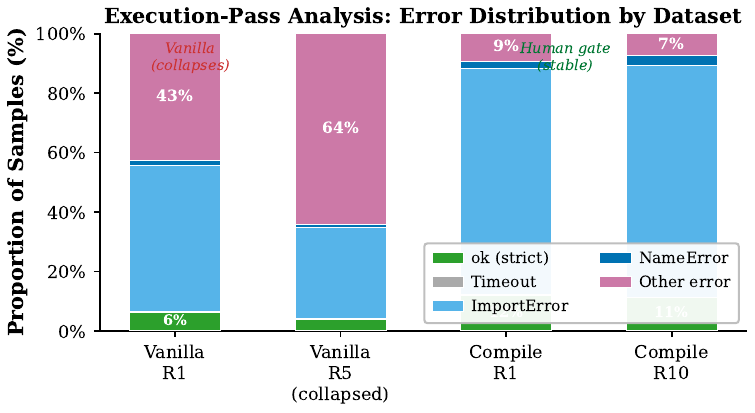}
\caption{%
  \textbf{Compile and execution-pass analysis across recursive retraining rounds.}
  Results are computed on 500 generated samples with a 5s timeout.
  The figure compares ungated vanilla recursion with the compile-based Human gate.
}
  \label{fig:exec_pass}
\end{figure}

\begin{table}[h]
\centering
\caption{Compile and execution-pass analysis on 500 generated samples with a 5s timeout.
The table compares ungated vanilla recursion with the compile-based Human gate across strict pass, relaxed pass, and error categories.}
\label{tab:exec_pass_analysis}
\resizebox{\columnwidth}{!}{
\begin{tabular}{lcccccc}
\toprule
\hline
\textbf{Dataset} & \textbf{ok (strict)} & \textbf{Timeout} & \textbf{ImportError}
  & \textbf{NameError} & \textbf{Other error} & \textbf{Relaxed pass} \\
\midrule
Vanilla R1             & 6.2\%  & 0.2\% & 49.2\% & 1.6\% & 42.8\% & 55.4\% \\
Vanilla R5 (collapsed) & 4.0\%  & 0.2\% & 30.6\% & 1.0\% & 64.2\% & 34.6\% \\
Compile R1            & 12.0\% & 0.2\% & 76.2\% & 2.4\% &  9.2\% & 88.2\% \\
Compile R10           & 11.4\% & 0.0\% & 78.0\% & 3.2\% &  7.4\% & 89.4\% \\
\hline
\bottomrule
\end{tabular}
}
\end{table}

\section{Per-Round Degradation Trajectories (SantaCoder, All Strategies)}
\label{app:per_round_tables}

Figure~\ref{fig:app_strategies_degrade} reports the per-round degradation trajectories of all six filtering strategies on SantaCoder. Each panel corresponds to one strategy and shows HumanEval pass@1 and MBPP pass@1 across recursive retraining rounds. Dotted horizontal lines denote the pre-retraining baseline for each benchmark. Quality-20R is evaluated for 20 rounds with sparse checkpoints to study long-horizon behavior, while the other strategies are evaluated for 5--10 rounds.

\begin{figure}[h]
  \centering
  \includegraphics[width=0.8\linewidth]{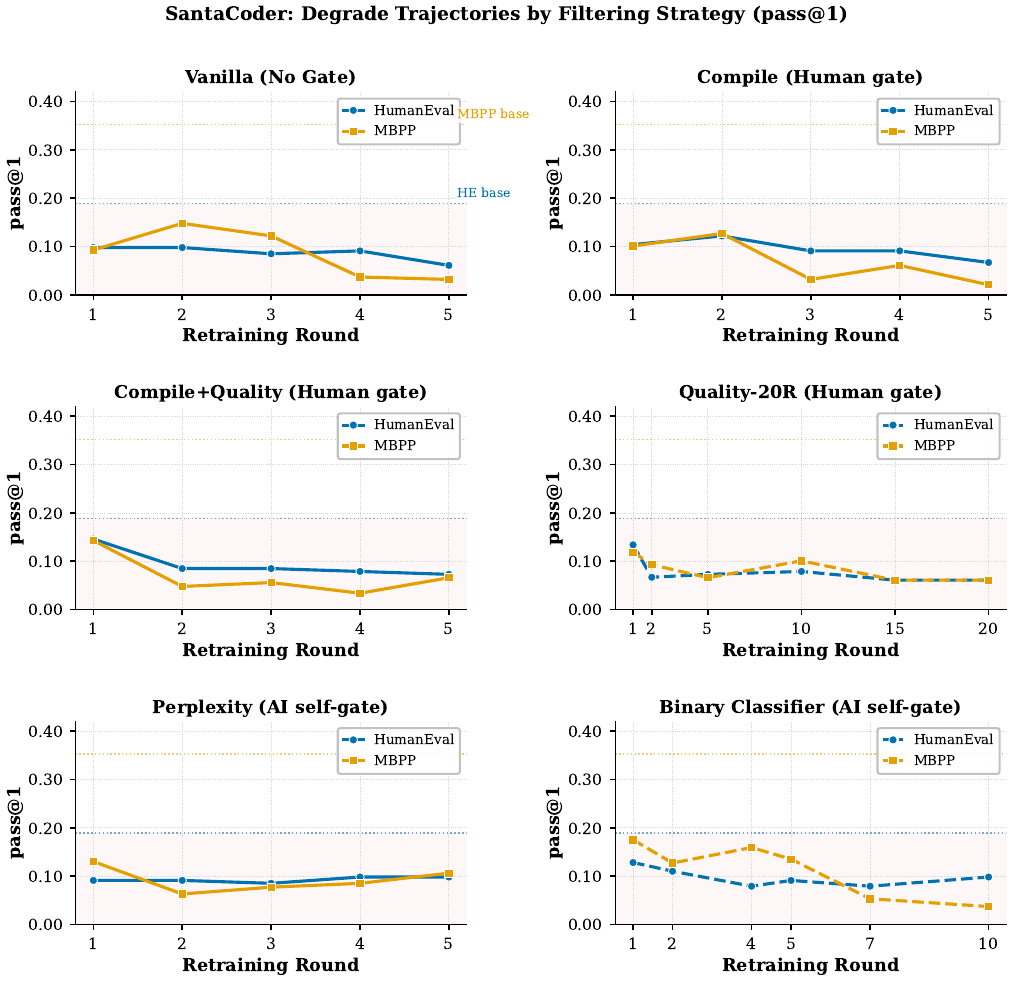}
  \caption{%
    \textbf{SantaCoder per-strategy degradation trajectories.}
    Each panel shows one filtering strategy.
    HumanEval pass@1 is shown with blue circles, and MBPP pass@1 is shown with orange squares.
    Dotted horizontal lines denote the pre-retraining baseline for each benchmark.
    The panels include Vanilla, Compile, Compile+Quality, Quality-20R, Perplexity, and Binary Classifier.
  }
  \label{fig:app_strategies_degrade}
\end{figure}

Figure~\ref{fig:app_strategies_degrade} shows that all strategies degrade below the pre-retraining baseline as recursive training proceeds. Vanilla collapses rapidly without any filtering, while Human-gate strategies generally preserve higher scores for longer. Quality-20R remains more stable over long horizons than the AI-self-gate methods. In contrast, Binary Classifier reaches a high early MBPP score but then drops sharply, matching the self-confirming gate failure described in Theorem~\ref{thm:gating_degenerate}.

\section{Cross-Model Degradation Trajectories}
\label{app:cross_model}

This section reports cross-model degradation trajectories under vanilla recursion, Human-gate filtering, and AI-self-gate filtering. Figure~\ref{fig:app_strategy_multimodel} compares filtering strategies on StarCoder2-3B and Qwen2.5-Coder-1.5B. Figure~\ref{fig:app_cross_gated} further reports compile-gated runs and extended vanilla trajectories.


\begin{figure}[t]
  \centering
  \includegraphics[width=0.8\linewidth]{figures/strategy_multimodel.pdf}
  \caption{%
    \textbf{Appendix cross-model per-strategy degradation trajectories.}
    HumanEval and MBPP pass@1 are reported over 10 self-training rounds for StarCoder2-3B and Qwen2.5-Coder-1.5B.
    The compared strategies are Vanilla, Compile, Perplexity, and Quality. Dashed red lines denote each model's pretraining baseline.
  }
  \label{fig:app_strategy_multimodel}
\end{figure}

\begin{figure}[h]
  \centering
  \includegraphics[width=0.8\linewidth]{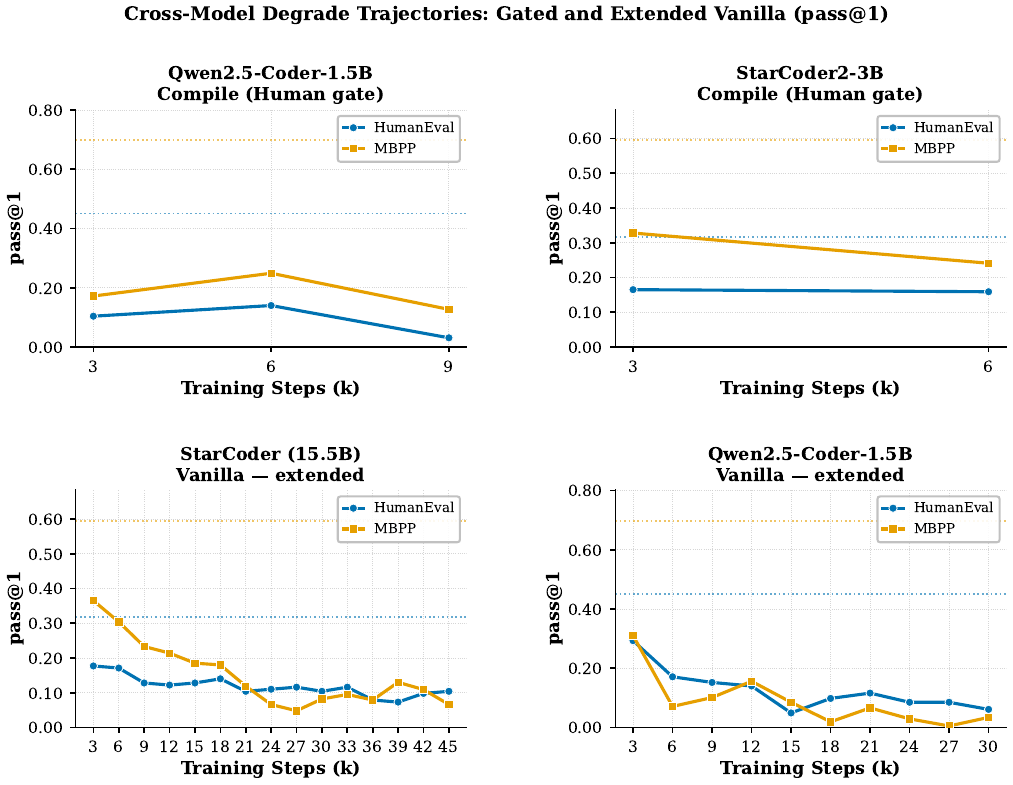}
  \caption{%
    \textbf{Cross-model degradation trajectories for gated and extended vanilla runs.}
    The top row reports compile-gated runs for Qwen2.5-Coder-1.5B and StarCoder2-3B.
    The bottom row reports extended vanilla runs for StarCoder and Qwen2.5-Coder-1.5B.
    Blue curves denote HumanEval, orange curves denote MBPP, and dotted lines denote pre-retraining baselines.
  }
  \label{fig:app_cross_gated}
\end{figure}

Figures~\ref{fig:app_strategy_multimodel} and \ref{fig:app_cross_gated} show that recursive self-training collapse is not specific to SantaCoder. Under vanilla recursion, all model families fall far below their pre-retraining baselines, with Qwen2.5-Coder dropping fastest despite starting from the highest baseline. Human-gate filtering can delay degradation, especially in the early rounds, but it does not fully prevent long-horizon decline. AI-self-gate filtering by PPL is also unstable: it remains below the pretraining baseline and often underperforms compile or quality filtering on MBPP. Overall, the cross-model results support the same conclusion as the main SantaCoder experiments: no review collapses fastest, Human gates slow collapse, and AI self-gates do not reliably stop recursive degradation.

\begin{figure}[t]
  \centering
  \includegraphics[width=0.8\linewidth]{figures/cross_model_degrade.pdf}
  \caption{%
    \textbf{Appendix cross-model vanilla collapse on HumanEval and MBPP.}
    The left panel reports HumanEval pass@1 and the right panel reports MBPP pass@1.
    All model families degrade substantially from the pre-retraining baseline at step 0.}
  \label{fig:app_cross_model_degrade}
\end{figure}

Figure~\ref{fig:app_cross_model_degrade} compares vanilla recursive self-training across three code LLM families. All models degrade rapidly from their step-0 pretraining baseline on both HumanEval and MBPP. Qwen2.5-Coder starts from the highest baseline but collapses the fastest, dropping sharply within the first few retraining steps. SantaCoder also falls quickly and reaches low scores after only a few rounds. StarCoder degrades more slowly than the smaller models, but it still remains far below its initial baseline after extended retraining. These results show that larger or stronger code models may slow the collapse rate, but vanilla recursive self-training still causes consistent long-horizon degradation.

\section{Implementation Details}
\label{app:impl}

\paragraph{Data generation and filtering.}
At each recursive round, the current model first performs one inference pass to generate candidate code from the fixed prompt pool. The accepted set size is fixed to 5000 samples per round. For Human-gate methods, we generate samples in batches of 2000 until 5000 samples pass the filter. For AI-self-gate methods, we generate 20{,}000 samples once, score all candidates, and retain the top or bottom 25\% depending on the score definition. After the accepted set is built, the model is fine-tuned on these self-generated samples for 3000 training steps. Thus, 10 recursive rounds correspond to 30k training steps. Across methods, the recursive loop is the same: one generation stage followed by one fine-tuning stage per round; only the filtering policy differs.

\paragraph{Evaluation and hardware.}
After each round, all methods are evaluated on HumanEval~\cite{human_eval,openai_humaneval}, MBPP~\cite{mbpp}, and LiveCodeBench~\cite{livecodebench}. We use \texttt{bigcode-evaluation-harness}~\cite{bigcode_eval_harness} with temperature 0.8 and top-$p$ 0.95 for generation, and greedy decoding for final pass@1 computation. HumanEval+ and MBPP+ are evaluated with EvalPlus~\cite{liu2024humaneval}. For execution-pass analysis, we evaluate 500 generated samples with a 5s timeout. All experiments are conducted on NVIDIA RTX 6000 Ada Generation GPUs.

\newpage

\section{Per-Round Full Results Tables (All Models)}
\label{app:full_per_round_tables}

This appendix provides complete per-round results for all four models across all five filtering
strategies and five rounds. All metrics are pass@1.
The full results appear in Tables~\ref{tab:app_santacoder_full}, \ref{tab:app_starcoder2_full}, \ref{tab:app_qwen25_full}, and~\ref{tab:app_codellama_full}.
Baselines (Round~0):
SantaCoder: HE=0.189, HE+=0.171, MBPP=0.352, MBPP+=0.294, LCB=0.018;
StarCoder2-3B: HE=0.317, HE+=0.274, MBPP=0.595, MBPP+=0.492, LCB=0.000;
Qwen2.5-Coder-1.5B: HE=0.451, HE+=0.372, MBPP=0.698, MBPP+=0.582, LCB=0.238;
Code Llama-7B: HE=0.328, HE+=0.287, MBPP=0.441, MBPP+=0.388, LCB=0.045.

\subsection{SantaCoder (1.1B)}

\begin{table}[h]
\centering
\caption{SantaCoder~(1.1B): full per-round results. FPR = filter pass rate.}
\label{tab:app_santacoder_full}
\resizebox{\textwidth}{!}{%
\begin{tabular}{llccccccc}
\toprule
\textbf{Filter} & \textbf{Round} & \textbf{HE} & \textbf{HE+} & \textbf{MBPP} & \textbf{MBPP+} & \textbf{LCB} & \textbf{Train Loss} & \textbf{FPR} \\
\midrule
-- & R0 & 0.189 & 0.171 & 0.352 & 0.294 & 0.018 & -- & -- \\
\midrule
Vanilla & R1 & 0.0976 & 0.0915 & 0.2011 & 0.1720 & 0.000 & 0.0601 & 1.000 \\
Vanilla & R2 & 0.1037 & 0.0915 & 0.1243 & 0.1085 & 0.000 & 0.0549 & 1.000 \\
Vanilla & R3 & 0.0854 & 0.0732 & 0.0794 & 0.0741 & 0.000 & 0.0517 & 1.000 \\
Vanilla & R4 & 0.0732 & 0.0671 & 0.0106 & 0.0079 & 0.000 & 0.0478 & 1.000 \\
Vanilla & R5 & 0.0793 & 0.0732 & 0.0265 & 0.0185 & 0.000 & 0.0506 & 1.000 \\
\midrule
Compile & R1 & 0.1098 & 0.0915 & 0.1217 & 0.1032 & 0.000 & 0.0534 & 1.000 \\
Compile & R2 & 0.0915 & 0.0671 & 0.0688 & 0.0529 & 0.000 & 0.0518 & 1.000 \\
Compile & R3 & 0.0610 & 0.0549 & 0.1005 & 0.0820 & 0.000 & 0.0507 & 1.000 \\
Compile & R4 & 0.1037 & 0.0976 & 0.0503 & 0.0370 & 0.000 & 0.0571 & 1.000 \\
Compile & R5 & 0.0915 & 0.0793 & 0.0450 & 0.0344 & 0.000 & 0.0537 & 1.000 \\
\midrule
Quality & R1 & 0.1159 & 0.0915 & 0.2116 & 0.1825 & 0.000 & 0.0488 & 1.000 \\
Quality & R2 & 0.0976 & 0.0732 & 0.0635 & 0.0582 & 0.000 & 0.0564 & 1.000 \\
Quality & R3 & 0.0854 & 0.0671 & 0.0397 & 0.0344 & 0.000 & 0.0517 & 1.000 \\
Quality & R4 & 0.0976 & 0.0671 & 0.0476 & 0.0370 & 0.000 & 0.0498 & 1.000 \\
Quality & R5 & 0.0671 & 0.0610 & 0.0794 & 0.0688 & 0.000 & 0.0435 & 1.000 \\
\midrule
PPL     & R1 & 0.1037 & 0.0793 & 0.1455 & 0.1429 & 0.000 & 0.0389 & 0.167 \\
PPL     & R2 & 0.0976 & 0.0732 & 0.1164 & 0.1005 & 0.003 & 0.0383 & 0.201 \\
PPL     & R3 & 0.1098 & 0.0732 & 0.0476 & 0.0423 & 0.000 & 0.0332 & 0.220 \\
PPL     & R4 & 0.0976 & 0.0732 & 0.0847 & 0.0661 & 0.000 & 0.0382 & 0.229 \\
PPL     & R5 & 0.0915 & 0.0610 & 0.0476 & 0.0317 & 0.000 & 0.0357 & 0.235 \\
\midrule
Binary  & R1 & 0.1341 & 0.1098 & 0.0767 & 0.0608 & 0.000 & 0.0618 & 0.250 \\
Binary  & R2 & 0.1098 & 0.0793 & 0.0926 & 0.0608 & 0.000 & 0.0537 & 0.250 \\
Binary  & R3 & 0.0793 & 0.0732 & 0.0582 & 0.0476 & 0.000 & 0.0553 & 0.250 \\
Binary  & R4 & 0.0732 & 0.0610 & 0.0265 & 0.0185 & 0.000 & 0.0526 & 0.250 \\
Binary  & R5 & 0.1159 & 0.1037 & 0.0794 & 0.0582 & 0.000 & 0.0115 & 0.250 \\
\bottomrule
\end{tabular}%
}
\end{table}

\subsection{StarCoder2-3B}

\begin{table}[h]
\centering
\caption{StarCoder2-3B: full per-round results.}
\label{tab:app_starcoder2_full}
\resizebox{\textwidth}{!}{%
\begin{tabular}{llccccccc}
\toprule
\textbf{Filter} & \textbf{Round} & \textbf{HE} & \textbf{HE+} & \textbf{MBPP} & \textbf{MBPP+} & \textbf{LCB} & \textbf{Train Loss} & \textbf{FPR} \\
\midrule
-- & R0 & 0.317 & 0.274 & 0.595 & 0.492 & 0.000 & -- & -- \\
\midrule
Vanilla & R1 & 0.1768 & 0.1646 & 0.3651 & 0.2963 & 0.000 & 0.2005 & 1.000 \\
Vanilla & R2 & 0.1707 & 0.1524 & 0.3042 & 0.2619 & 0.000 & 0.1366 & 1.000 \\
Vanilla & R3 & 0.1280 & 0.1159 & 0.2328 & 0.2116 & 0.000 & 0.1169 & 1.000 \\
Vanilla & R4 & 0.1220 & 0.1220 & 0.2143 & 0.1746 & 0.000 & 0.1026 & 1.000 \\
Vanilla & R5 & 0.1280 & 0.1037 & 0.1852 & 0.1720 & 0.000 & 0.1013 & 1.000 \\
\midrule
Compile & R1 & 0.1646 & 0.1402 & 0.3280 & 0.2884 & 0.005 & 0.0480 & 1.000 \\
Compile & R2 & 0.1585 & 0.1341 & 0.2407 & 0.1984 & 0.003 & 0.0259 & 1.000 \\
Compile & R3 & 0.1159 & 0.0976 & 0.1720 & 0.1481 & 0.003 & 0.0410 & 1.000 \\
Compile & R4 & 0.1098 & 0.0976 & 0.0635 & 0.0450 & 0.000 & 0.0393 & 1.000 \\
Compile & R5 & 0.0976 & 0.0854 & 0.0847 & 0.0741 & 0.000 & 0.0307 & 1.000 \\
\midrule
Quality & R1 & 0.1585 & 0.1280 & 0.3122 & 0.2646 & 0.008 & 0.0519 & 1.000 \\
Quality & R2 & 0.1402 & 0.1098 & 0.1984 & 0.1720 & 0.005 & 0.0409 & 1.000 \\
Quality & R3 & 0.1280 & 0.1037 & 0.1825 & 0.1376 & 0.000 & 0.0271 & 1.000 \\
Quality & R4 & 0.1037 & 0.0793 & 0.1720 & 0.1429 & 0.000 & 0.0338 & 1.000 \\
Quality & R5 & 0.1037 & 0.0915 & 0.1058 & 0.0979 & 0.003 & 0.0330 & 1.000 \\
\midrule
PPL     & R1 & 0.1585 & 0.1341 & 0.2698 & 0.2249 & 0.005 & 0.0385 & 0.244 \\
PPL     & R2 & 0.1463 & 0.1159 & 0.2063 & 0.1746 & 0.000 & 0.0342 & 0.227 \\
PPL     & R3 & 0.1220 & 0.1037 & 0.1508 & 0.1190 & 0.000 & 0.0309 & 0.232 \\
PPL     & R4 & 0.1280 & 0.0915 & 0.1323 & 0.1005 & 0.000 & 0.0287 & 0.237 \\
PPL     & R5 & 0.0976 & 0.0854 & 0.1323 & 0.1138 & 0.003 & 0.0121 & 0.242 \\
\midrule
Binary  & R1 & 0.2317 & 0.1951 & 0.3016 & 0.2778 & 0.005 & 0.0454 & 0.250 \\
Binary  & R2 & 0.1768 & 0.1524 & 0.2566 & 0.2249 & 0.000 & 0.0401 & 0.250 \\
Binary  & R3 & 0.1524 & 0.1159 & 0.2910 & 0.2672 & 0.000 & 0.0137 & 0.250 \\
Binary  & R4 & 0.1341 & 0.1098 & 0.1190 & 0.1005 & 0.000 & 0.0381 & 0.250 \\
Binary  & R5 & 0.1402 & 0.1220 & 0.0899 & 0.0661 & 0.000 & 0.0397 & 0.250 \\
\bottomrule
\end{tabular}%
}
\end{table}

\subsection{Qwen2.5-Coder-1.5B}

\begin{table}[h]
\centering
\caption{Qwen2.5-Coder-1.5B: full per-round results.}
\label{tab:app_qwen25_full}
\resizebox{\textwidth}{!}{%
\begin{tabular}{llccccccc}
\toprule
\textbf{Filter} & \textbf{Round} & \textbf{HE} & \textbf{HE+} & \textbf{MBPP} & \textbf{MBPP+} & \textbf{LCB} & \textbf{Train Loss} & \textbf{FPR} \\
\midrule
-- & R0 & 0.451 & 0.372 & 0.698 & 0.582 & 0.238 & -- & -- \\
\midrule
Vanilla & R1 & 0.2927 & 0.2622 & 0.3095 & 0.2672 & 0.033 & 0.1091 & 1.000 \\
Vanilla & R2 & 0.1707 & 0.1280 & 0.0714 & 0.0661 & 0.010 & 0.0338 & 1.000 \\
Vanilla & R3 & 0.1524 & 0.1159 & 0.1005 & 0.0820 & 0.008 & 0.0463 & 1.000 \\
Vanilla & R4 & 0.1402 & 0.1280 & 0.1561 & 0.1323 & 0.008 & 0.0445 & 1.000 \\
Vanilla & R5 & 0.0488 & 0.0427 & 0.0847 & 0.0820 & 0.000 & 0.0403 & 1.000 \\
\midrule
Compile & R1 & 0.1463 & 0.1220 & 0.0503 & 0.0450 & 0.030 & 0.0467 & 1.000 \\
Compile & R2 & 0.1524 & 0.1341 & 0.0317 & 0.0265 & 0.008 & 0.0423 & 1.000 \\
Compile & R3 & 0.0793 & 0.0732 & 0.0608 & 0.0529 & 0.008 & 0.0441 & 1.000 \\
Compile & R4 & 0.1707 & 0.1341 & 0.0820 & 0.0741 & 0.005 & 0.0458 & 1.000 \\
Compile & R5 & 0.1037 & 0.0976 & 0.1481 & 0.1270 & 0.003 & 0.0418 & 1.000 \\
\midrule
Quality & R1 & 0.2195 & 0.1768 & 0.1561 & 0.1402 & 0.053 & 0.0501 & 1.000 \\
Quality & R2 & 0.2012 & 0.1646 & 0.0873 & 0.0794 & 0.013 & 0.0402 & 1.000 \\
Quality & R3 & 0.1463 & 0.1220 & 0.1217 & 0.0979 & 0.015 & 0.0397 & 1.000 \\
Quality & R4 & 0.1220 & 0.0915 & 0.1958 & 0.1667 & 0.008 & 0.0287 & 1.000 \\
Quality & R5 & 0.0976 & 0.0854 & 0.1243 & 0.0979 & 0.008 & 0.0302 & 1.000 \\
\midrule
PPL     & R1 & 0.2073 & 0.1768 & 0.0556 & 0.0450 & 0.020 & 0.0287 & 0.176 \\
PPL     & R2 & 0.2012 & 0.1585 & 0.0212 & 0.0159 & 0.003 & 0.0259 & 0.186 \\
PPL     & R3 & 0.1585 & 0.1341 & 0.1640 & 0.1429 & 0.000 & 0.0276 & 0.196 \\
PPL     & R4 & 0.1037 & 0.0793 & 0.1217 & 0.0952 & 0.008 & 0.0267 & 0.218 \\
PPL     & R5 & 0.1098 & 0.0854 & 0.1138 & 0.1005 & 0.005 & 0.0277 & 0.236 \\
\midrule
Binary  & R1 & 0.1646 & 0.1402 & 0.2460 & 0.2090 & 0.043 & 0.0448 & 0.250 \\
Binary  & R2 & 0.1646 & 0.1341 & 0.0714 & 0.0608 & 0.005 & 0.0459 & 0.250 \\
Binary  & R3 & 0.1585 & 0.1159 & 0.1217 & 0.0847 & 0.013 & 0.0102 & 0.250 \\
Binary  & R4 & 0.0976 & 0.0854 & 0.1032 & 0.0899 & 0.000 & 0.0439 & 0.250 \\
Binary  & R5 & 0.1463 & 0.1220 & 0.0820 & 0.0714 & 0.000 & 0.0374 & 0.250 \\
\bottomrule
\end{tabular}%
}
\end{table}

\subsection{Code Llama-7B}

\begin{table}[h]
\centering
\caption{Code Llama-7B: full per-round results.}
\label{tab:app_codellama_full}
\resizebox{\textwidth}{!}{%
\begin{tabular}{llccccccc}
\toprule
\textbf{Filter} & \textbf{Round} & \textbf{HE} & \textbf{HE+} & \textbf{MBPP} & \textbf{MBPP+} & \textbf{LCB} & \textbf{Train Loss} & \textbf{FPR} \\
\midrule
-- & R0 & 0.328 & 0.287 & 0.441 & 0.388 & 0.045 & -- & -- \\
\midrule
Vanilla & R1 & 0.2256 & 0.1829 & 0.2989 & 0.2619 & 0.025 & 0.0383 & 1.000 \\
Vanilla & R2 & 0.2073 & 0.1890 & 0.3492 & 0.2778 & 0.005 & 0.0327 & 1.000 \\
Vanilla & R3 & 0.1890 & 0.1829 & 0.2857 & 0.2513 & 0.003 & 0.0328 & 1.000 \\
Vanilla & R4 & 0.1341 & 0.1220 & 0.2037 & 0.1772 & 0.003 & 0.0264 & 1.000 \\
Vanilla & R5 & 0.1768 & 0.1341 & 0.2354 & 0.2143 & 0.003 & 0.0276 & 1.000 \\
\midrule
Compile & R1 & 0.2195 & 0.1768 & 0.2302 & 0.2011 & 0.008 & 0.0346 & 1.000 \\
Compile & R2 & 0.2012 & 0.1707 & 0.2354 & 0.2063 & 0.000 & 0.0318 & 1.000 \\
Compile & R3 & 0.1768 & 0.1524 & 0.2725 & 0.2354 & 0.003 & 0.0268 & 1.000 \\
Compile & R4 & 0.1707 & 0.1341 & 0.1349 & 0.1190 & 0.005 & 0.0270 & 1.000 \\
Compile & R5 & 0.1890 & 0.1707 & 0.1243 & 0.0899 & 0.003 & 0.0284 & 1.000 \\
\midrule
Quality & R1 & 0.2012 & 0.1768 & 0.2143 & 0.1878 & 0.033 & 0.0359 & 1.000 \\
Quality & R2 & 0.2256 & 0.1829 & 0.3042 & 0.2460 & 0.010 & 0.0321 & 1.000 \\
Quality & R3 & 0.1707 & 0.1402 & 0.2328 & 0.1852 & 0.005 & 0.0290 & 1.000 \\
Quality & R4 & 0.1707 & 0.1402 & 0.2037 & 0.1667 & 0.010 & 0.0056 & 1.000 \\
Quality & R5 & 0.1220 & 0.1098 & 0.1429 & 0.1323 & 0.005 & 0.0278 & 1.000 \\
\midrule
PPL     & R1 & 0.2012 & 0.1646 & 0.2963 & 0.2275 & 0.010 & 0.0286 & 0.220 \\
PPL     & R2 & 0.2012 & 0.1707 & 0.2804 & 0.2275 & 0.018 & 0.0231 & 0.245 \\
PPL     & R3 & 0.2134 & 0.1646 & 0.1958 & 0.1640 & 0.000 & 0.0216 & 0.247 \\
PPL     & R4 & 0.1890 & 0.1585 & 0.2884 & 0.2275 & 0.008 & 0.0216 & 0.246 \\
PPL     & R5 & 0.1646 & 0.1341 & 0.2063 & 0.1746 & 0.000 & 0.0212 & 0.248 \\
\midrule
Binary  & R1 & 0.2012 & 0.1524 & 0.2778 & 0.2249 & 0.030 & 0.0240 & 0.250 \\
Binary  & R2 & 0.1890 & 0.1524 & 0.2646 & 0.2090 & 0.005 & 0.0325 & 0.250 \\
Binary  & R3 & 0.2256 & 0.1829 & 0.1958 & 0.1799 & 0.008 & 0.0190 & 0.250 \\
Binary  & R4 & 0.1890 & 0.1463 & 0.1481 & 0.1296 & 0.010 & 0.0276 & 0.250 \\
Binary  & R5 & 0.1829 & 0.1585 & 0.1667 & 0.1481 & 0.003 & 0.0184 & 0.250 \\
\bottomrule
\end{tabular}%
}
\end{table}

\end{document}